\pdfpagewidth=8.5in
\pdfpageheight=11in
\pdfoutput=1

\documentclass[11pt, twoside, a4paper]{article}
\usepackage[top=1.25in, bottom=1.25in, left=1.25in, right=1.25in, a4paper]{geometry}
\usepackage[T1]{fontenc}
\usepackage{lmodern}
\usepackage{amsthm}
\usepackage{amsfonts}
\usepackage{amssymb}
\usepackage{amsmath}
\usepackage{mathtools}
\usepackage{float}
\usepackage{graphicx}
\usepackage{caption}
\usepackage{subcaption}
\usepackage[newenum]{paralist}
\usepackage{verbatim}
\usepackage[splitrule, bottom]{footmisc}
\usepackage{enumitem}
\usepackage{url}
\usepackage{hyperref}
\bibliographystyle{plainurl}

\usepackage[noend]{algpseudocode}
\usepackage{algorithm}
\renewcommand{\algorithmicrequire}{\textbf{Input:}}
\renewcommand{\algorithmicensure}{\textbf{Output:}}

\makeatletter
\usepackage[usenames,dvipsnames,svgnames,table]{xcolor}
\usepackage{varwidth}
\usepackage{calc} 

\newcommand{\setensurelength}{%
    \settowidth{\@tempdima}{\algorithmicensure}%
    \setlength{\@tempdima}{\dimexpr\linewidth-\@tempdima-1em+\@totalleftmargin}}
\newcommand{\setrequirelength}{%
    \settowidth{\@tempdima}{\algorithmicrequire}%
    \setlength{\@tempdima}{\dimexpr\linewidth-\@tempdima-1em+\@totalleftmargin}}
\newcommand{\ensurebox}[1]{%
    \setensurelength%
    \parbox[t]{\@tempdima}{\strut #1\strut}}
\newcommand{\requirebox}[1]{%
  \setrequirelength%
  \parbox[t]{\@tempdima}{\strut #1\strut}}
\makeatother
\makeatletter
\usepackage{linegoal}
\algrenewcommand\algorithmicindent{1.5em}
\algblock[local]{}{}
\algnewcommand{\LineComment}[1]{\Statex \hskip\ALG@thistlm \parbox[t]{\@tempdima}{\hangindent=1em\hangafter=1 $\triangleright$ #1}}
\makeatother
\newlength{\continueindent}
\setlength{\continueindent}{1.5em}
\usepackage{etoolbox}
\makeatletter
\newcommand*{\ALG@customparshape}{\parshape 2 \leftmargin \linewidth \dimexpr\ALG@tlm+\continueindent\relax \dimexpr\linewidth+\leftmargin-\ALG@tlm-\continueindent\relax}
\apptocmd{\ALG@beginblock}{\ALG@customparshape}{}{\errmessage{failed to patch}}
\makeatother
\makeatletter
\newcommand{\algrule}[1][0.2pt]{\par\vskip.5\baselineskip\hrule height #1\par\vskip.5\baselineskip}
\makeatother

\newtheorem{theorem}{Theorem}[section]
\newtheorem{lemma}[theorem]{Lemma}

\newtheorem{corollary}[theorem]{Corollary}

\newtheorem*{definition}{Definition}

\hypersetup{colorlinks,linkcolor=blue,filecolor=blue,citecolor=blue, urlcolor=blue,pdfstartview=FitH}

\DeclareMathOperator{\poly}{poly}
\DeclareMathOperator{\ot}{\widetilde{\mathnormal{O}}}

\DeclareMathOperator*{\E}{\mathbf{E}}

\newcommand{\dsg}{DSG scheme}
\newcommand{\dgs}{DGS scheme}
\newcommand{\rsg}{RSG scheme}
\newcommand{\bor}{Bor\u{u}vka's }
\let\oldepsilon\epsilon
\renewcommand{\epsilon}{\mathnormal\varepsilon}

\title{Super-fast MST Algorithms in the Congested Clique using $o(m)$ Messages\footnote{This work is supported in part by National Science Foundation grant CCF 1318166.}}
\author{Sriram V.\ Pemmaraju \hspace{4em} Vivek B.\ Sardeshmukh\\ \small{Department of Computer Science, The University of Iowa, Iowa City, IA 52242}\\
\texttt{\{sriram-pemmaraju, vivek-sardeshmukh\}@uiowa.edu}}

\begin{document}
\maketitle
\begin{abstract}

In a sequence of recent results (PODC 2015 and PODC 2016), the running time of the fastest algorithm for the \emph{minimum spanning tree (MST)} problem in the \emph{Congested Clique} model was first improved to $O(\log \log \log n)$ from $O(\log \log n)$ (Hegeman et al., PODC 2015) and then to $O(\log^* n)$ (Ghaffari and Parter, PODC 2016). 
All of these algorithms use $\Theta(n^2)$ messages independent of the number of edges in the input graph.

This paper positively answers a question raised in Hegeman et al., and presents the first ``super-fast'' MST algorithm with $o(m)$ message complexity for input graphs with $m$ edges. 
Specifically, we present an algorithm running in $O(\log^* n)$ rounds with high probability, with message complexity $\ot(\sqrt{m \cdot n})$ 
and then build on this algorithm to derive a family of algorithms, containing for any $\varepsilon$, $0 < \varepsilon \le 1$, 
an algorithm running in $O(\log^* n/\varepsilon)$ rounds with high probability, 
using $\ot(n^{1 + \varepsilon}/\varepsilon)$ messages. 
Setting $\varepsilon = \log\log n/\log n$ leads to the first sub-logarithmic round 
Congested Clique MST algorithm that uses only $\widetilde{\mathnormal{O}}(n)$ messages. 

Our primary tools in achieving these results are 
(i) a component-wise bound on the number of candidates for MST edges, extending the sampling lemma of Karger, Klein, and Tarjan (Karger, Klein, and Tarjan, JACM 1995) and 
(ii) $\Theta(\log n)$-wise-independent linear graph sketches (Cormode and Firmani, Dist.~Par.~Databases, 2014) for generating MST candidate edges.  
\end{abstract}

\section{Introduction}
\label{section:introduction}
The \textit{Congested Clique} is a synchronous, message-passing model of distributed computing in which
the underlying network is a clique and in each round, a message of size $O(\log n)$ bits can be sent in each direction across each communication link. 
The Congested Clique is a simple, clean model for studying the obstacles imposed by congestion -- all relevant information is nearby in the network (at most 1 hop away), but may not be able to travel to
an intended node due to the $O(\log n)$-bit bandwidth restriction on the communication links.
There has been a lot of recent work in studying various fundamental problems in the Congested Clique model,
including facility location~\cite{GehweilerSPAA2006, berns2012facloc}, \textit{minimum spanning tree (MST)}
\cite{lotker2005mstJournal,Hegeman2014disc,Hegeman15podc,GhaffariParter}, shortest paths and distances \cite{Censor15podc,HolzerP14,Nanongkai14},
triangle finding \cite{drucker2012task,DolevLP12}, subgraph detection~\cite{drucker2012task},
ruling sets \cite{berns2012facloc,Hegeman2014disc}, sorting~\cite{Patt-ShamirT11,Lenzen13}, and routing \cite{Lenzen13}.
The modeling assumption in solving these problems is that the input graph $G =(V, E)$ is ``embedded''
in the Congested Clique -- each node of $G$ is uniquely mapped to a machine and the edges of $G$
are naturally mapped to the links between the corresponding machines (see Section \ref{sec:model}).

The earliest non-trivial example of a Congested Clique algorithm is the \textit{deterministic} MST
algorithm that runs in $O(\log \log n)$ rounds due to Lotker et al.~\cite{lotker2005mstJournal}.
Using \textit{linear sketching} \cite{AhnSODA12, AhnPODS12, JowhariL0, McGregorSurvey, cormode2014sampling} and 
the \textit{sampling} technique due to Karger, Klein, and 
Tarjan \cite{KKT1995MST}, Hegeman et al.~\cite{Hegeman15podc} were able to
design a substantially faster, \textit{randomized} Congested Clique MST algorithm, running in $O(\log \log \log n)$ rounds.
Soon afterwards, Ghaffari and Parter \cite{GhaffariParter} designed an $O(\log^* n)$-round algorithm, using the 
techniques in Hegeman et al., but supplemented with the use of \textit{sparsity-sensitive sketching}, which is useful
for sparse graphs and \textit{random edge sampling}, which is useful for dense graphs.
  
\paragraph{Our Contributions.} All of the MST algorithms mentioned above, essentially use the entire bandwidth of the Congested Clique model, i.e., they use $\Theta(n^2)$ messages. 
  From these examples, one might (incorrectly!) conclude that ``super-fast'' Congested Clique algorithms are only possible when the entire bandwidth of the model is used. 
  In this paper, we focus on the design of MST algorithms in the Congested Clique model that have low
  \textit{message complexity}, while still remaining ``super-fast.''
  Message complexity refers to the number of messages sent and received by all machines over 
  the course of an algorithm; in many applications, this is the dominant cost as it plays a major role 
  in determining the running time and auxiliary resources (e.g., energy) consumed by the algorithm. 
  In our main result, we present an $O(\log^* n)$-round algorithm that uses $\ot(\sqrt{m \cdot n})$
  \footnote{The notation $\ot$ hides $\poly (\log n)$ factors.}
  messages for an $n$-node, $m$-edge input graph. 
  Two points are worth noting about this message complexity
  upper bound: (i) it is bounded above by $\ot(n^{1.5})$ for all values of $m$ and is thus substantially sub-quadratic,
  independent of $m$ and (ii) it is bounded above by $o(m)$ for all values of $m$ that are 
  super-linear in $n$, i.e., when $m = \omega(n \poly (\log n))$.  
  We then extend this result to design a family of algorithms parameterized by $\varepsilon$, $0 < \varepsilon \le 1$, 
  and running in $O(\log^* n/\varepsilon)$ rounds and using $\ot(n^{1+\epsilon}/{\varepsilon})$ messages. 
  If we set $\varepsilon = \log\log n/\log n$, we get an algorithm running in $O(\log^* n \cdot \log n /\log\log n)$ rounds
  and using $\ot(n)$ messages.
  Thus we demonstrate the existence of a sub-logarithmic round MST algorithm using only $O(n \cdot \mbox{poly}(\log n))$ 
  messages, positively answering a question posed in Hegeman et al.~\cite{Hegeman15podc}.
  We note that Hegeman et al.~present an algorithm using $\ot(n)$ messages
  that runs in $O(\log^5 n)$ rounds.
  All of the round and message complexity bounds mentioned above hold with high probability (w.h.p.), i.e., with probability at least $1 - \frac{1}{n}$. 
  Our results indicate that the power of the Congested Clique model lies not so much in its 
  $\Theta(n^2)$ bandwidth as in the flexibility it provides -- any communication link that is 
  needed is present in the network, though most communication links may eventually not be needed.
  
\paragraph{Applications.} Optimizing message complexity as well as time complexity for Congested Clique algorithms has direct applications
  to the performance of distributed algorithms in other models such as the Big Data ($k$-machine) model \cite{KlauckNPR15},
  which was recently introduced to study distributed computation on large-scale graphs.
  Via a Conversion Theorem in~\cite{KlauckNPR15} one can obtain fast algorithms in the Big Data model
  from Congested Clique algorithms that have low time complexity \emph{and} message complexity. 
  Another related motivation comes from the connection between the Congested Clique model and the MapReduce model.
  In~\cite{HegemanP14} it is shown that if a Congested Clique algorithm runs in $T$ rounds and, in addition, has moderate message complexity
  then it can be simulated in the MapReduce model in $O(T)$ rounds.
  
  \subsection{Technical Preliminaries}
  \label{sec:model}
  \paragraph{Congested Clique model.}
  The \emph{Congested Clique} is a set of $n$ computing entities (nodes) connected through a complete
  network that provides point-to-point communication.
  Each node in the network has a distinct identifier of $O(\log n)$ bits. At the beginning
  of the computation, each node knows the identities of all $n$ nodes in the network and the part of the input
  assigned to it. 
  The computation proceeds in synchronous rounds. In each round each node can perform some local
  computation and send a (\textit{possibly different}) message of $O(\log n)$ bits to each of its $n-1$ neighbors.
  It is assumed that both the computing entities and the communication links are fault-free.
  The Congested Clique model is therefore specifically geared towards understanding the role of
  the limited bandwidth as a fundamental obstacle in distributed computing, in contrast to other
  classical models for distributed computing that instead focus, e.g., on the effects of latency
  (the \textsc{Local} model) or on the effects of both latency and limited bandwidth (the \textsc{Congest} model).
  
  The input graph is assumed to be a spanning subgraph of the underlying communication network. 
  Before the algorithm starts, each node knows the edges of the input graph incident on it and
  their (respective) weights.
  We assume that every edge weight can be represented with $O(\log n)$ bits.
  For ease of exposition, we assume that edge weights are distinct; 
  otherwise, without loss of generality (WLOG)
  we can ``pad'' each edge weight with the IDs of the two end points of the edge so as to distinguish the edges by weight while respecting their weight-based ordering.
  We require that when the algorithm ends, each node knows which of its incident
  edges belong to the output MST.
  
  \paragraph{Linear Sketches.}
  A key tool used by our algorithm is \textit{linear sketches}~\cite{AhnSODA12, AhnPODS12, McGregorSurvey}.
  Let $\mathbf{a}_v$ denote a vector whose non-zero entries represent edges incident on $v$.
  A \textit{linear sketch} of $\mathbf{a}_v$ is a low-dimensional random vector $\mathbf{s}_v$,
  typically of size $O(\poly (\log n))$, with two properties:
  (i) sampling from the sketch $\mathbf{s}_v$ returns a non-zero entry of $\mathbf{a}_v$ with uniform probability (over all non-zero entries in $\mathbf{a}_v$) and
  (ii) when nodes in a connected component are merged, the sketch of the new ``super node'' is obtained by coordination-wise addition of the sketches of
  the nodes in the component.
  The first property is referred to as \textit{$\ell_0$-sampling} in the streaming 
  literature~\cite{cormode2014sampling,McGregorSurvey, JowhariL0} and
  the second property is \textit{linearity}.
  The graph sketches used in~\cite{AhnSODA12, AhnPODS12, McGregorSurvey} rely on the $\ell_0$-sampling algorithm by Jowhari et al.~\cite{JowhariL0}.
  Sketches constructed using the Jowhari et al.~\cite{JowhariL0} approach use $\Theta(\log^2 n)$ bits per sketch,
  but require polynomially many mutually independent random bits to be shared among all nodes in the network.
  Sharing this volume of information is not feasible; it takes too many rounds and too many messages.
  So instead, we appeal to the $\ell_0$-sampling algorithm of Cormode and Firmani~\cite{cormode2014sampling} 
  which requires a family of
  $\Theta(\log n)$-wise independent hash functions, as opposed to hash functions with full-independence.
  Hegeman et al.~\cite{Hegeman15podc} provide details of how the Cormode-Firmani approach can be used in the Congested Clique
  model to construct graph sketches.
  We summarize their result in the following theorem.
  \begin{theorem}[Hegeman et al.~\cite{Hegeman15podc}]\label{thm:sketches}
  Given an input graph $G=(V, E)$, $n = |V|$, there is a Congested Clique algorithm running in $O(1)$ rounds and
  using $O(n \cdot \poly (\log n))$ messages,
  at the end of which every node $v \in V$ has computed a linear sketch $\mathbf{s}_v$ of $\mathbf{a}_v$.
  The size of the computed sketch of a node is $O(\log^4 n)$ bits.
  The $\ell_0$-sampling algorithm on sketch $\mathbf{s}_v$ succeeds with probability at least $1-n^{-2}$ and, 
  conditioned on success, returns an edge in $\mathbf{a}_v$ with probability in the range $[1/L_v - n^{-2}, 1/L_v + n^{-2}]$,
  where $L_v$ is the number of non-zero entries in $\mathbf{a}_v$.
  \end{theorem}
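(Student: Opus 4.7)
The plan is to realize, in the Congested Clique setting, the $\ell_0$-sampling construction of Cormode and Firmani~\cite{cormode2014sampling} applied to the edge-indicator vectors $\mathbf{a}_v$, with the crucial observation that only $\Theta(\log n)$-wise independent hash functions are required, so the shared randomness can be distributed cheaply. Concretely, I would first recall the Cormode--Firmani $\ell_0$-sampler: it uses a hash function $h$ mapping the (potential) coordinate universe into $\Theta(\log n)$ geometrically decreasing subsampling levels, and at each level maintains a lightweight ``1-sparse recovery'' structure (essentially a tuple storing the sum, weighted sum, and a fingerprint of the surviving entries) that can recover the single non-zero coordinate whenever exactly one survives at that level. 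The recovery succeeds at \emph{some} level with probability $1-O(1/\poly(n))$, and conditioned on success returns each non-zero coordinate uniformly up to an additive $n^{-2}$ error, provided $h$ is drawn from a $\Theta(\log n)$-wise independent family (see~\cite{cormode2014sampling}).

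The next step is to share the hash function seed across the clique. A $\Theta(\log n)$-wise independent hash family on a universe of size $\poly(n)$ admits a seed of $O(\log^2 n)$ bits (via the polynomial-evaluation construction over a field of size $\poly(n)$). I would let the machine with the smallest ID sample a seed locally and broadcast it, in parallel, to all $n-1$ other machines. This takes $O(1)$ rounds and $O(n \log n)$ messages of $O(\log n)$ bits each. Since the sketch coordinates and the recovery structure depend only on the seed and on the edges incident to $v$, each node $v$ can now compute $\mathbf{s}_v$ locally, with \emph{no} further communication. By linearity of the sketch (each coordinate is a linear function of $\mathbf{a}_v$ over a suitable ring), the merge property under component unions is immediate.

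For the size bound, each subsampling level stores an $O(\log n)$-bit sum, an $O(\log n)$-bit weighted sum, and an $O(\log n)$-bit fingerprint, i.e., $O(\log n)$ bits per level and $O(\log n)$ levels, for $O(\log^2 n)$ bits per sampler. To amplify the success probability from constant to $1-n^{-2}$, I would run $O(\log n)$ independent samplers in parallel, bringing the per-sketch size to $O(\log^3 n)$; an extra $O(\log n)$ factor (for storing hash descriptions, or for the universe being edges indexed by pairs of IDs) gives the claimed $O(\log^4 n)$ bits. The near-uniformity claim on the returned edge then follows from the Cormode--Firmani analysis together with a union bound over the $O(\log n)$ levels, each of which deviates from uniform by at most $n^{-3}$ under $\Theta(\log n)$-wise independence.

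The step I expect to require the most care is verifying that $\Theta(\log n)$-wise independence, rather than full independence, still suffices both for the ``exactly one survivor'' event at some level to hold with high probability \emph{and} for the conditional output distribution to be $n^{-2}$-close to uniform; this is exactly the content of the Cormode--Firmani analysis, which I would invoke rather than reprove. Everything else (seed broadcast, local sketch computation, and the additive accounting of rounds and messages) is routine once the sampler is in place.
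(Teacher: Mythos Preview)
Your proposal is essentially correct and matches the approach the paper itself attributes to this result. Note, however, that the paper does not actually give a proof of this theorem: it is stated as a summary of a result from Hegeman et al.~\cite{Hegeman15podc}, with the preceding paragraph explaining precisely the ingredients you use (Cormode--Firmani $\ell_0$-sampling with $\Theta(\log n)$-wise independent hash functions, and a cheap broadcast of the shared seed). So there is no in-paper proof to compare against; your sketch is faithful to the cited construction and to the paper's own informal description of it.
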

  \paragraph{Concentration Bounds for sums of $k$-wise-independent random variables.}
  The use of $k$-wise-independent random variables, for $k = \Theta(\log n)$, plays a key role in keeping the
  time and message complexity of our algorithms low.
  The use of $\Theta(\log n)$-wise independent hash functions in the construction of linear sketches has been
  mentioned above.
  In the next subsection, we discuss the use of $\Theta(\log n)$-wise-independent edge sampling
  as a substitute for the fully-independent edge sampling of Karger, Klein, and Tarjan.
  For our analysis we use the following concentration bound on the sum of $k$-wise independent random
  variables, due to Schmidt et al.~\cite{SchmidtSS95} and slightly simplified by Pettie and Ramachandran~\cite{PettieRamachandran}. 
  \begin{theorem}[Schmidt et al.~\cite{SchmidtSS95}]\label{thm:schmidt}
  Let $X_1, X_2, \ldots, X_n$ be a sequence of random
  $k$-wise independent 0-1 random variables with $X = \sum_{i=1}^n X_i$. If $k \ge 2$ is even and 
  $C \ge \E[X]$ then:
  $$Pr(|X - \E[X]| \ge T)  \le \left[\sqrt{2} \cosh\left(\sqrt{k^3/36C}\right)\right] \cdot \left(\frac{kC}{eT^2}\right)^{k/2}.$$
  \end{theorem}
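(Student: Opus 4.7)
The plan is the standard moment method, using the fact that $k$-wise independence already suffices to match the central moments of order at most $k$ with those of a fully independent sum. Set $Y = X - \E[X]$ and note that $k$ even makes $Y^k \ge 0$, so Markov's inequality yields $\Pr(|Y|\ge T) \le \E[Y^k]/T^k$. The entire task therefore reduces to bounding the central $k$-th moment $\E[Y^k]$.

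To eliminate the $k$-wise-independence assumption, I would expand $Y^k = (\sum_i (X_i - p_i))^k$ by the multinomial theorem, where $p_i = \E[X_i]$. Every resulting monomial involves at most $k$ distinct indices, so its expectation is determined by the joint distribution of at most $k$ of the $X_i$. By $k$-wise independence this expectation coincides with the corresponding quantity for a family $\tilde X_1,\ldots,\tilde X_n$ of \emph{fully mutually independent} Bernoullis with identical marginals $p_i$. Hence $\E[Y^k] = \E[\tilde Y^k]$ for $\tilde Y = \sum_i (\tilde X_i - p_i)$, and from this point on one can freely assume full independence.

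For the independent sum I would apply a Bennett-style moment-generating-function bound: the one-coordinate estimate $\E[e^{\pm\lambda(\tilde X_i - p_i)}] \le \exp(p_i(e^{\pm\lambda}\mp\lambda - 1))$ gives, on multiplying across coordinates and using $\mu \le C$, that $\E[e^{\pm\lambda\tilde Y}] \le \exp(C(e^{\pm\lambda}\mp\lambda - 1))$ for every $\lambda>0$. Applying the algebraic identity $(e^{a}+e^{b})/2 = e^{(a+b)/2}\cosh((a-b)/2)$ to these two estimates then yields
\[
\E[\cosh(\lambda\tilde Y)] \;\le\; \exp\!\bigl(C(\cosh\lambda - 1)\bigr)\cdot\cosh\!\bigl(C(\sinh\lambda - \lambda)\bigr).
\]
Because $k$ is even, every term in $\cosh(z) = \sum_{j\ge 0} z^{2j}/(2j)!$ is non-negative, so isolating the $j=k/2$ term gives $(\lambda\tilde Y)^k/k! \le \cosh(\lambda\tilde Y)$ pointwise, and therefore $\E[\tilde Y^k] \le k!\,\lambda^{-k}\E[\cosh(\lambda\tilde Y)]$.

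The remaining step is to optimize $\lambda$. Choosing $\lambda = \sqrt{k/C}$ makes the leading exponent $C(\cosh\lambda - 1)$ essentially $k/2$ (the Taylor-series correction is of order $k^2/C$, which for $k \ge 2$ is dominated by the argument of the cosh factor) and simultaneously identifies $C(\sinh\lambda - \lambda)$ with $\sqrt{k^3/36C}$ to leading order. Combined with Stirling's bound $k! \le \sqrt{2\pi k}\,(k/e)^k$, this yields $\E[\tilde Y^k]/T^k \le (kC/(eT^2))^{k/2}$ multiplied by $\cosh(\sqrt{k^3/36C})$ and an absolute constant. The hardest part of the argument is the careful bookkeeping needed to push all Taylor-series corrections from $\cosh\lambda - 1$ and $\sinh\lambda - \lambda$ into the single $\cosh$ factor on the right-hand side, and to shrink the Stirling prefactor to the stated constant $\sqrt{2}$ rather than settling for a larger absolute constant; this is where the proof is most sensitive to the particular choice of $\lambda$ and to the sharp form of the $\cosh$-averaging identity.
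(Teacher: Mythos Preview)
The paper does not prove this theorem at all: it is quoted as a known result from Schmidt, Siegel, and Srinivasan (in the simplified form given by Pettie and Ramachandran) and is used as a black box. There is therefore no ``paper's own proof'' to compare against; the only text accompanying the statement is the remark that in every application one has $k=\Theta(\log n)$, $C=T=\E[X]$, and $\E[X]>k^3$, so the $\cosh$ factor is $O(1)$ and the tail bound is $n^{-c}$.

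As to your sketch itself: the reduction from $k$-wise to fully independent variables via the multinomial expansion of $(\sum_i(X_i-p_i))^k$ is exactly right, and the $\cosh$ trick (isolating the degree-$k$ term of $\cosh(\lambda\tilde Y)$ and bounding $\E[\cosh(\lambda\tilde Y)]$ via the averaged MGF identity) is a clean route to the correct shape of the bound. Your computation that $C(\cosh\lambda-1)\approx k/2$ and $C(\sinh\lambda-\lambda)\approx\sqrt{k^3/(36C)}$ at $\lambda=\sqrt{k/C}$ is accurate to leading order. The only genuine gap is the one you already flag: the naive combination of Stirling with $e^{k/2}$ produces a prefactor of order $\sqrt{2\pi k}$ rather than $\sqrt{2}$, and you have not indicated how to close that gap. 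For the purposes of the present paper this is immaterial (any polynomial-in-$k$ prefactor is absorbed once $k=\Theta(\log n)$ and the main term is $n^{-c}$), but if you want the theorem \emph{as stated} you would need the sharper moment bound from the original Schmidt--Siegel--Srinivasan analysis, which proceeds by a direct combinatorial estimate of $\E[(X-\mu)^k]$ rather than through the MGF.
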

  We use the above theorem for $k = \Theta(\log n)$ and $C = T = \E[X]$. Furthermore, in all instances in which we use this bound,
  $\E[X] > k^3$ and therefore the contribution of the $\cosh(\cdot)$ term is $O(1)$, whereas the contribution of the second term on the right hand side
  is smaller than $1/n^c$ for any constant $c$.
  \paragraph{MST with Linear Message Complexity.}
  The ``super-fast'' MST algorithms mentioned so far \cite{lotker2005mstJournal,Hegeman15podc,GhaffariParter} use $\Theta(n^2)$ messages, independent of the number of edges in
  the input graph. One reason for this is that these algorithms rely on deterministic constant-round Congested Clique 
  algorithms for routing and sorting due to Lenzen \cite{Lenzen13}. 
  Lenzen's algorithms do not attempt to explicitly conserve messages and need 
  $\Omega(n^{1.5})$ messages independent of the number of messages being routed or the number of keys being sorted.
  However, the above-mentioned MST algorithms do not need the full power of Lenzen's algorithms. 
  We design sorting and routing protocols that work in slightly restricted settings, but use only a linear number
  of messages (i.e., linear in the total number messages to be routed or keys to be sorted). Details of these protocols
  appear in Section~\ref{sec:linear}. 
  We use these protocols (instead of Lenzen's protocols) as subroutines in the Ghaffari-Parter MST algorithm 
  \cite{GhaffariParter} to derive a version that uses only linear (up to a polylogarithmic factor) number of messages. 
  
  \subsection{Algorithmic Overview} \label{sub:overview}
  The high-level structure of our algorithm is simple.
  Suppose that the input is an $n$-node, $m$-edge graph $G = (V, E)$.
  We start by sparsifying $G$ by sampling each edge with probability $p$ and compute a \textit{maximal minimum weight spanning forest}
  $F$ of the resulting sparse subgraph $H$.
  Thus $H$ contains $O(m \cdot p)$ edges w.h.p.
  Now consider an edge $\{u, v\}$ in $G$ and add it to $F$; if $F + \{u, v\}$ contains a cycle and $\{u, v\}$ is
  the heaviest edge in this cycle, then by Tarjan's ``red rule'' \cite{TarjanBook} the MST of $G$ does not contain edge $\{u, v\}$.
  Ignoring all such edges leaves a set of edges that are candidates for being in the MST.
  We appeal to the well-known sampling lemma due to Karger, Klein, and Tarjan \cite{KKT1995MST} (KKT sampling) that provides an 
  estimate of the size of this set of candidates.
  \begin{definition}[$F$-light edge~\cite{KKT1995MST}]
  Let $F$ be a forest in a graph $G$ and let $F(u,v)$ denote the path (if any) connecting $u$ and $v$ in $F$.
  Let $w_F(u, v)$ denote the maximum weight of an edge on $F(u, v)$ (if there is no path then $w_F(u, v) = \infty$).
  We call an edge $\{u, v\}$ \emph{$F$-heavy} if $w(u, v) > w_F(u, v)$, and \emph{$F$-light} otherwise.
  \end{definition}
  \begin{lemma}[KKT Sampling Lemma~\cite{KKT1995MST}]
  \label{lemma:kkt}
  Let $H$ be a subgraph obtained from $G$ by including each edge independently
  \footnote{For reasons that will become clear later, our goal of keeping the
    message complexity low, does not allow us to assume full independence in this sampling. 
    Instead we use $\Theta(\log n)$-wise independent sampling and show that a slightly weaker version of the KKT Sampling Lemma holds even with limited independence sampling.} 
  with probability $p$ and let $F$ be the maximal minimum weight spanning forest of $H$.
  The number of $F$-light edges in $G$ is at most $n/p$, w.h.p.
\end{lemma}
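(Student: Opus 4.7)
The plan is to use the classical Kruskal-with-delayed-decisions coupling from the original KKT argument. I would process the edges of $G$ in increasing order of weight while simultaneously deciding sample-membership in $H$ and growing the forest $F$. Specifically, maintain an initially empty forest $F'$, and for each edge $e = \{u,v\}$ in sorted order: (i) if $u$ and $v$ lie in the same component of $F'$, declare $e$ to be $F$-heavy and defer the coin flip for its inclusion in $H$, since this choice cannot affect $F$; (ii) otherwise, flip an independent Bernoulli$(p)$ coin --- on success, place $e$ into $H$ and add it to $F'$; on failure, exclude $e$ from $H$. The deferred coin flips for the heavy edges are carried out independently afterward.

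The first thing to verify is that $F'$ at the end of this process is precisely the maximal minimum weight spanning forest of the resulting sampled graph $H$: the procedure is just Kruskal's algorithm applied to $H$, examining edges of $H$ in increasing weight order and adding each one exactly when it bridges distinct components. Next, I would argue that an edge $e = \{u,v\}$ of $G$ is $F$-light precisely when case (ii) is reached for $e$. In case (ii), $u$ and $v$ are in distinct components of $F$ restricted to edges strictly lighter than $e$, so either $F$ contains no $u$-$v$ path at all (so $w_F(u,v) = \infty$) or any such path must traverse at least one edge of weight $> w(e)$ added later in the run; hence $w(e) \le w_F(u,v)$ and $e$ is $F$-light. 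Conversely, in case (i) the $u$-$v$ path already present in $F'$ uses only edges strictly lighter than $e$, so $w_F(u,v) < w(e)$ and $e$ is $F$-heavy.

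With this characterization in hand, the number of $F$-light edges equals the total number of Bernoulli$(p)$ trials fired across the run. Because $F'$ can accept at most $n-1$ edges before it becomes a spanning forest that never triggers case (ii) again, the trial count is stochastically dominated by a negative binomial random variable with parameters $n-1$ and $p$, whose mean is $(n-1)/p$. A standard Chernoff-style tail bound then yields that the number of $F$-light edges is at most $O(n/p)$ with probability at least $1 - n^{-c}$ for any fixed constant $c$, provided $n/p = \Omega(\log n)$ (otherwise the bound holds trivially).

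The main subtle point is verifying the coupling itself --- that it genuinely produces the correct joint distribution of $(H, F)$ and that $F$-lightness coincides exactly with the ``different-component'' event at the moment $e$ is processed. Everything else reduces to Kruskal correctness plus a clean concentration bound on a geometric waiting-time sum. As the footnote to the lemma statement signals, the same argument survives when the per-edge sampling is only $\Theta(\log n)$-wise independent: the characterization of $F$-light edges is deterministic and is untouched, and the Chernoff step can be replaced by Theorem \ref{thm:schmidt}, yielding a bound of the form $O((n/p)\cdot \poly(\log n))$ --- the ``slightly weaker version'' alluded to in the footnote.
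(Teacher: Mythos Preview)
Your argument is the classical Karger--Klein--Tarjan proof and is correct for the fully-independent sampling stated in the lemma. Note, however, that the paper does not supply its own proof of Lemma~\ref{lemma:kkt}; it is quoted as a known result from~\cite{KKT1995MST}. What the paper actually proves is the limited-independence analogue announced in the footnote, namely Corollary~\ref{coro:fsize}, and it does so by a genuinely different route: rather than a Kruskal-style coupling and a global negative-binomial bound, the paper runs Bor\u{u}vka on $F$, shows (Lemma~\ref{lemma:correctness}) that the $F$-light edges are exactly $\bigcup_{i,j} L^i_j$, and then bounds each $|L^i_j|$ separately by $O(\log^3 n/p)$ (Lemma~\ref{lemma:lsize}) using the Pettie--Ramachandran ordered-sampling lemma together with Theorem~\ref{thm:schmidt}. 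Summing over the $O(n)$ component--phase pairs yields $\ot(n/p)$. The payoff of the paper's approach is the \emph{per-component} bound, which is precisely what the \textsc{Compute-F-Light} algorithm needs; your global bound, while sufficient for the bare statement of Lemma~\ref{lemma:kkt}, would not by itself support Step~5 of Algorithm~\ref{algo:f-light}.

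One caution on your final paragraph: the claim that ``the Chernoff step can be replaced by Theorem~\ref{thm:schmidt}'' is too quick. In the Kruskal coupling the set of edges whose coins are revealed (the $F$-light ones) is itself a random set determined adaptively by earlier coin outcomes, so the count of $F$-light edges is not a sum of a \emph{fixed} collection of $k$-wise-independent indicators to which Theorem~\ref{thm:schmidt} applies directly. The paper sidesteps this by reducing to per-component events of the form ``the $\ell$ lightest outgoing edges of a fixed cut are all unsampled,'' which \emph{is} a fixed sum of $k$-wise-independent indicators. If you want your global argument to survive under $\Theta(\log n)$-wise independence, you would need an additional step---e.g., a union bound over all cuts that could arise, or an adaptation of the Pettie--Ramachandran argument at the global level---rather than a one-line substitution of tail bounds.
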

As our next step we compute the set of $F$-light edges and in our final step, we compute an MST of the subgraph
induced by the $F$-light edges.
Thus, at a high level, our algorithm consists of two calls to an MST subroutine on sparse graphs,
one with $O(m \cdot p)$ edges and the other with $O(n/p)$ edges.
In between, these two calls is the computation of $F$-light edges.
This overall algorithmic structure is clearly visible in Lines 5--7 in the pseudocode in Algorithm~\ref{algo:mst1} \textsc{MST-v1}.

There are several obstacles to realizing this high-level idea in the Congested Clique model in order to
obtain an algorithm that is ``super-fast'' and yet has low message complexity.
The reason for sparsifying $G$ and appealing to the KKT Sampling Lemma is the expectation that we would need
to use fewer messages to compute an MST on a sparser input graph.
However, all of the ``super-fast'' MST algorithms mentioned earlier in the paper use $\Theta(n^2)$ messages
and are insensitive to the number of edges in the input graph.
In our first contribution, we develop a collection of simple, low-message-complexity distributed routing and
sorting subroutines that we can use in any of the ``super-fast'' MST algorithms mentioned above \cite{lotker2005mstJournal,Hegeman15podc,GhaffariParter} (see Section~\ref{sec:linear}) in order to reduce their message 
complexity to $O(m)$, without increasing their time complexity.
Specifically, modifying the Ghaffari-Parter MST algorithm to use these routing and sorting subroutines 
allows us to complete the two calls to the MST subroutine in $O(\log^* n)$ rounds using 
$\max\{O(m \cdot p), O(n/p)\}$ messages.
Setting the sampling probability $p$ in our algorithm 
to $\sqrt{\frac{n}{m}}$ balances the two terms in the $\max(\cdot, \cdot)$ and yields a
message complexity of $O(\sqrt{m \cdot n})$.
We describe this in Section~\ref{sec:linear}.

Our second and \emph{main} contribution (Section \ref{sec:computeF})
is to show that the computation of $F$-light can be completed
in $O(1)$ rounds, while still using $\ot(\sqrt{m \cdot n})$ messages.
To explain the challenge of this computation we present two simple algorithmic scenarios:
\begin{itemize}
  \item Suppose that we want each node $u$ to perform a local computation to determine which of its incident edges
  from $G$ are $F$-light.
  To do this, node $u$ needs to know $w_F(u, v)$ for all neighbors $v$. Thus $u$ needs $\mbox{degree}_G(u)$
  pieces of information and overall this approach seems to require the movement of $\Omega(m)$ pieces of
  information, i.e., $\Omega(m)$ messages.
  \item Alternately, we might want each node that knows $F$ to be responsible for determining which
  edges in $G$ are $F$-light.
  In this case, the obvious approach is to send queries of the type ``Is edge $\{u, v\}$ $F$-light?'' to nodes that know $F$.
  This approach also requires $\Omega(m)$ messages.
\end{itemize}
Various combinations of and more sophisticated versions of these ideas also require $\Omega(m)$ 
messages.
So the fundamental question is how do we determine the status (i.e., $F$-light or $F$-heavy) of $m$ edges while 
exchanging far fewer than $m$ messages?
Below we outline two techniques we have developed in order to answer this question.
\begin{description}
  \item [Component-wise bound on number of $F$-light edges.] 
  As mentioned above, the KKT Sampling Lemma upper bounds the total number of $F$-light edges
  by $O(n/p)$, which is $O(\sqrt{m \cdot n})$ for $p = \sqrt{n/m}$.
  We show (in Corollary~\ref{coro:fsize}) that a slightly weaker bound (weaker by a logarithmic factor) holds even if the edge-sampling is done using an $\Theta(\log n)$-wise-independent sampler.
  If we could ensure that the total volume of communication is proportional to the number of $F$-light edges, 
  we would achieve our goal of $o(m)$ message complexity.
  To achieve this goal we show that the set of $F$-light edges has additional structure; they are ``evenly distributed'' over the components of $F$.
  To understand this imagine that $F$ is constructed from $H$ using Bor\u{u}vka's algorithm.
  Let $\mathcal{C}^i = \{C^i_1, C^i_2, \ldots\}$ be the set of components at the beginning of a phase $i$ of the algorithm.
  For each component $C^i_j \in \mathcal{C}^i$, the algorithm picks a \textit{minimum weight outgoing edge (MWOE)} $e^i_j$ from $F$. 
  Components are merged using edges $e^i_j, j = 1, 2,\ldots$ and we get a new set of components $\mathcal{C}^{i+1}$.
  Let $L^i_j$ be the set of edges in $G$ leaving component $C^i_j$ \textit{with weight at most $w(e^i_j)$}.
  We show in Lemma \ref{lemma:correctness} that the set of all $F$-light edges is just the union
  of the $L^i_j$'s, over all phases $i$ and components $j$ within Phase $i$.
  Furthermore, we show in Lemma \ref{lemma:lsize} that the size of $L^i_j$ for any $i, j$ is
  is bounded by $\ot(1/p)$ w.h.p.
  This ``even distribution'' of $F$-light edges suggests that we could make each component $C^i_j$ responsible for identifying the 
  $L^i_j$-edges.
  Note that we don't use distributed \bor algorithm to compute $F$ because that would take $\Theta(\log n)$ rounds.
  We compute $F$ in $O(\log^* n)$ rounds using \textsc{LinearMessages-MST}, the modified Ghaffari-Parter algorithm (see Section~\ref{sec:linear}).
  $F$ is then gathered at each of a small number of nodes and each node who knows $F$ completely simulates \bor algorithm \textit{locally}
  on $F$, thus identifying the components $C^i_j$ and their MWOE's $e^i_j$.)
  
  \item[Component-wise generation of $F$-light edges using linear sketches.] 
  Linear sketches play a key role in helping nodes in each component $C^i_j$ collectively 
  compute all edges in $L^i_j$. 
  For any node $v$ and number $x$, let $N_x(v)$ denote the set of neighbors of $v$ that are 
  connected to $v$ via edges of weight less than $x$.
  Each node $v \in C^i_j$ computes a \textit{$w(e^i_j)$-restricted sketch} $\mathbf{s}_v$,
  i.e., a sketch of its neighborhood $N_{w(e^i_j)}$,
  and sends it to the component leader of $C^i_j$ who aggregates these sketches to compute a single component sketch. 
  Sampling this sketch yields a single edge in $L^i_j$. 
  Since $L^i_j$ has $\ot(1/p)$ edges, each node $v\in C^i_j$ can send $\ot(1/p)$ separate 
  $w(e^i_j)$-restricted sketches to the component leader of $C^i_j$ and the 
  Coupon Collector argument ensures that this volume of sketches is enough to generate 
  \textit{all} edges incident in $L^i_j$ w.h.p. 
\end{description}

\paragraph{Remark:} 
The sampling approach of Karger, Klein, and Tarjan is used in a somewhat minor way in earlier 
Congested Clique MST algorithms \cite{GhaffariParter,Hegeman15podc} and in fact in \cite{KorhonenArxiv16} it is shown that this sampling
approach can be replaced by a simple, deterministic sparsification.
However, KKT sampling and specifically its $\Theta(\log n)$-wise independent version that we use in the current algorithm
seems crucial for ensuring low message complexity, while keeping the algorithms fast.

\subsection{Related Work}

It is important to point out that our algorithms are designed for the so-called KT1 \cite{peleg2000distributed} model,
where every node initially knows the IDs of all its neighbors, in addition to its own ID.
(In the Congested Clique model, this means that each node knows the IDs of all $n$ nodes in the network.)
If we drop this assumption and work in the so-called KT0 model \cite{peleg2000distributed}, in which nodes are unaware of IDs 
of neighbors, then it has been shown in \cite{Hegeman15podc} that $\Omega(m)$ messages are needed by any Congested Clique MST algorithm (including randomized Monte Carlo algorithms, and regardless of
the number of rounds) on an $m$-edge input graph.
In fact, this lower bound is shown for the simpler graph connectivity problem.

There have also been some recent developments on simultaneously optimizing message complexity and round complexity for the
MST problem in the \textsc{Congest} model.
For example, in \cite{PanduranganRS16} it is shown that there exists a randomized (Las Vegas) algorithm that runs in 
$\ot(\sqrt{n} + \mbox{diameter}(G))$ rounds and uses $\ot(m)$ messages (both w.h.p.).
This improves the message complexity of the well-known Kutten-Peleg algorithm \cite{KuttenPeleg1998}, without sacrificing round complexity (upto
polylogarithmic factors).
The Kutten-Peleg algorithm runs in $O(\sqrt{n}\log^* n + \mbox{diameter}(G))$ rounds, while using 
$O(m + n^{1.5})$ messages.
Note that the algorithm in \cite{PanduranganRS16} simultaneously matches the round complexity lower bound 
\cite{Elkin2006,DasSarmaSICOMP2011} and the message complexity lower bound \cite{KuttenPPRTJACM2015} for
the MST problem.

The above-mentioned upper and lower bound results assume the KT0 model.
In the KT1 model, the message complexity lower bound of Kutten et al.~\cite{KuttenPPRTJACM2015} does not hold and 
King et al.~\cite{king15testout} were able to design an MST algorithm in the KT1 \textsc{Congest} model that uses 
$\ot(n)$ messages, though this algorithm has significantly higher round complexity than
$\ot(\sqrt{n} + \mbox{diameter}(G))$ rounds.

As mentioned earlier, Hegeman et al.~\cite{Hegeman15podc} present a Congested Clique MST algorithm using 
$\ot(n)$ messages, but running in $O(\log^5 n)$ rounds.
One can make a few changes to the King et al.~\cite{king15testout} \textsc{Congest}-model algorithm to implement it in the Congested Clique model, 
requiring $\ot(n)$ messages, but running in $O(\log^2 n/\log \log n)$ rounds. 

\section{MST Algorithms}\label{sec:mst}
In this section we describe two ``super-fast'' MST algorithms,
the first runs in $O(\log^* n)$ rounds, using $\ot(\sqrt{m\cdot n})$ messages and
the second algorithm running in $O(\log^* n/\varepsilon)$ rounds, 
using $\ot(n^{1+\varepsilon}/\varepsilon)$ messages, for any $0 < \varepsilon \le 1$. 

\subsection{A super-fast algorithm using \texorpdfstring{$\ot(\sqrt{mn})$}{o(m)} messages}
\label{sub:mst1}
Our first algorithm \textsc{MST-v1}, shown in Algorithm~\ref{algo:mst1} has already been outlined in Section~\ref{sub:overview}. 
The correctness, time complexity, and message complexity of this algorithm depends mainly on two subroutines: $\textsc{LinearMessages-MST}(\cdot)$ and $\textsc{Compute-F-Light}(\cdot)$.
For the purpose of this section, we assume that $\textsc{LinearMessages-MST}(H)$ computes an MST on an $n$-node $m$-edge input graph $H$ in $O(\log^* n)$ rounds using $\ot(m)$ messages. 
This is shown in Section~\ref{sec:linear}. 
We also show that $\textsc{Compute-F-Light}(G, F, p)$ terminates in $O(1)$ rounds using $\ot(n/p)$ messages w.h.p. 
This is the main result in our paper and is shown in Section~\ref{sec:computeF}.
\begin{algorithm}[!ht]
  \caption{\textsc{MST-v1} \label{algo:mst1}}
  \begin{algorithmic}[1]
    \Require \requirebox{An edge-weighted $n$-node, $m$-edge graph $G = (V, E, w)$.
	\LineComment{Each node knows weights and end-points of incident edges.
	Every weight can be represented using $O(\log n)$ bits.}}
    \Ensure \ensurebox{An MST $\mathcal{T}$ of $G$.
	\LineComment{Each node in $V$ knows which of its incident edges are part of $T$.}}
	\textcolor{gray!70}{\algrule}
    \LineComment{Let $v^*$ denote the node with lowest ID in $V$, known to all nodes.}
    \State $v^*$ generates a sequence $\pi$ of $\Theta(\log^2 n)$ bits independently and uniformly at random and shares with all nodes in $V$. 
    \State $p \gets \sqrt{\frac{n}{m}}$
    \State Each node constructs an $\Theta(\log n)$-wise-independent sampler from $\pi$ and uses this to sample each incident edge in $G$ with probability $p$
    \State $H \gets$ the spanning subgraph of $G$ induced by the sampled edges
    \State $F \gets \textsc{LinearMessages-MST}(H)$
    \State $E_\ell \gets \textsc{Compute-F-Light}(G, F, p)$
    \State $\mathcal{T} \gets \textsc{LinearMessages-MST}((V, E_\ell, w))$
    \State \Return $\mathcal{T}$
  \end{algorithmic}
\end{algorithm}
\begin{lemma} \label{lemma:hsize}
  For some constants $c_1, c_2 > 1$, 
  (i) $\Pr(|E(H)| > c_1 \cdot \sqrt{mn}) < \frac{1}{n}$ and
  (ii) $\Pr(|E_\ell| > c_2 \cdot \sqrt{mn} \poly (\log n)) < \frac{1}{n}$. 
\end{lemma}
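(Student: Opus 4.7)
The plan is to prove the two parts separately, the first by a concentration argument for $k$-wise independent sampling and the second by invoking the limited-independence version of the KKT Sampling Lemma stated earlier.

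For part (i), I would first observe that if $p = \sqrt{n/m} \ge 1$ (which happens only when $m \le n$), then $H = G$, so $|E(H)| = m \le \sqrt{mn}$ deterministically. Hence the interesting case is $m > n$, where $p < 1$. Writing $|E(H)| = \sum_{e \in E} X_e$ where $X_e$ is the indicator that $e$ was sampled, we have $\E[|E(H)|] = mp = \sqrt{mn}$. Because the sampler is $\Theta(\log n)$-wise independent, the $X_e$'s are $\Theta(\log n)$-wise independent 0-1 variables, so Theorem~\ref{thm:schmidt} applies with $k = \Theta(\log n)$ and $C = T = \E[|E(H)|] = \sqrt{mn}$. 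Since $m > n$ implies $\sqrt{mn} > n > k^3$ for large $n$, the $\cosh(\cdot)$ factor is $O(1)$ and the tail bound collapses to $(k/e\sqrt{mn})^{k/2}$, which is at most $n^{-2}$ for a suitable constant in the choice of $k$. Taking $c_1 = 2$ then yields $\Pr(|E(H)| > c_1 \sqrt{mn}) < 1/n$.

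For part (ii), the set $E_\ell$ output by $\textsc{Compute-F-Light}(G, F, p)$ is exactly the set of $F$-light edges of $G$. The KKT Sampling Lemma (Lemma~\ref{lemma:kkt}) bounds this set by $n/p$ under fully-independent edge sampling, but the excerpt explicitly promises (Corollary~\ref{coro:fsize}) a bound weaker by a $\poly(\log n)$ factor that holds under the $\Theta(\log n)$-wise independent sampler actually used in Algorithm~\ref{algo:mst1}. Plugging in $p = \sqrt{n/m}$ gives $n/p = \sqrt{mn}$, so the corollary delivers $|E_\ell| = O(\sqrt{mn} \cdot \poly(\log n))$ with probability at least $1 - 1/n$, which is exactly the stated bound with a suitable constant $c_2$.

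The main obstacle is the concentration step in (i): one must verify that the $\Theta(\log n)$-wise-independent sampling is strong enough to give high-probability concentration at $\E[|E(H)|] = \sqrt{mn}$, and in particular that the regime where this expectation is small (i.e., $m$ not much larger than $n$) does not escape the hypothesis $\E[X] > k^3$ of the simplified form of Theorem~\ref{thm:schmidt}. This is handled cleanly by the case split $p \ge 1$ versus $p < 1$: in the former $H = G$ deterministically, and in the latter $\sqrt{mn} \ge n$ so the hypothesis is comfortably satisfied. Part (ii) is essentially a citation of a result proved later, so no real work happens there within this lemma.
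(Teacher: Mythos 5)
Your proof takes essentially the same approach as the paper's: part (i) is the direct application of Theorem~\ref{thm:schmidt} to the $\Theta(\log n)$-wise independent indicators $X_e$ with $C=T=\E[|E(H)|]=\sqrt{mn}$, and part (ii) is a citation of Corollary~\ref{coro:fsize} with $p=\sqrt{n/m}$. Your added care in splitting off the degenerate case $p\ge 1$ and in explicitly checking the hypothesis $\E[X]>k^3$ needed for the simplified form of the Schmidt et al.\ bound is a genuine refinement of a point the paper glosses over, but it is a tightening rather than a different route.
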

\begin{proof}
  For $0 < i \leq m$,  let $X_i = 1$ if edge $i$ is sampled. 
  Hence $|E(H)| = \sum_i X_i$ and $\E[|E(H)|] = \sqrt{mn}$.
  Note that $X_i$'s are $\Theta(\log n)$-wise independent. 
  Therefore, by Theorem~\ref{thm:schmidt} we have, $\Pr(|E(H)| > c_1\sqrt{mn}) < \frac{1}{n}$ for some suitable constant $c_1 > 1$. 
  Claim (ii) follows from Corollary~\ref{coro:fsize}.
\end{proof}
The following theorem summarizes the properties of Algorithm \textsc{MST-v1}. 
The running time and message complexity bounds follow from Table~\ref{tab:mst1}.
\begin{table}[b]
  \caption{Time and message complexity for steps in Algorithm~\ref{algo:mst1} \textsc{MST-v1}\label{tab:mst1}}
  \centering
  \small
  \begin{tabular}{l l l l}
    \hline
    \textbf{Step} & \textbf{Time} & \textbf{Messages} & \textbf{Analysis} \\ \hline \noalign{\vskip 0.2ex} 
    1 & $O(1)$ & $\ot(n)$ & Theorem~\ref{thm:dgs} \\  
    2-4 & - & - & Local computation \\
    5 & $O(\log^* n)$ & $\ot(|E(H)|)$ & Theorem~\ref{thm:linearMST} \\
    6 & $O(1)$ & $\ot\left(\sqrt{mn}\right)$ & Theorem~\ref{thm:computeF} with $p=\sqrt{\frac{n}{m}}$ \\
    7 & $O(\log^* n)$ & $\ot(|E_\ell|)$ & Theorem~\ref{thm:linearMST} \\ \hline
  \end{tabular}
\end{table}
\begin{theorem}\label{thm:mst1}
  Algorithm \textsc{MST-v1} computes an MST of an edge-weighted $n$-node, $m$-edge graph $G$ when it terminates.
  Moreover, it terminates in $O(\log^* n)$ rounds and requires $\ot(\sqrt{mn})$ messages w.h.p.
\end{theorem}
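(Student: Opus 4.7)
The plan is to establish the three assertions of the theorem --- correctness, round complexity, and message complexity --- by assembling the building blocks already in place: the KKT Sampling Lemma (Lemma \ref{lemma:kkt}) and its limited-independence variant Corollary \ref{coro:fsize} (already invoked through Lemma \ref{lemma:hsize}), the guarantees of \textsc{LinearMessages-MST} (Theorem \ref{thm:linearMST}), and the guarantees of \textsc{Compute-F-Light} (Theorem \ref{thm:computeF}).

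For \emph{correctness}, the crux is that $\mathcal{T} = \textsc{MST}(V, E_\ell, w)$ equals the MST of $G$. I would argue this via Tarjan's red (cycle) rule: any $F$-heavy edge $\{u,v\}$ in $G$ satisfies $w(u,v) > w_F(u,v)$, so it is the unique heaviest edge on a cycle formed with the $F$-path between $u$ and $v$, hence cannot lie in any MST of $G$. Thus every MST-edge of $G$ belongs to $E_\ell$. Conversely, every edge of $F$ is trivially $F$-light (with $w(u,v) = w_F(u,v)$), so $F \subseteq E_\ell$ and $(V, E_\ell)$ spans the same connected components as $G$. Therefore $\mathrm{MST}(V, E_\ell, w) = \mathrm{MST}(G)$, and Step~7 returns the correct tree (with each node learning its incident MST edges as guaranteed by \textsc{LinearMessages-MST}).

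For the \emph{round complexity}, I would simply sum the per-step bounds from Table \ref{tab:mst1}: Step~1 is $O(1)$, Steps~2--4 are purely local, Steps~5 and~7 each cost $O(\log^* n)$ rounds by Theorem \ref{thm:linearMST}, and Step~6 costs $O(1)$ rounds by Theorem \ref{thm:computeF}. The sum is $O(\log^* n)$.

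For the \emph{message complexity}, I would condition on the two high-probability events of Lemma \ref{lemma:hsize}: $|E(H)| = O(\sqrt{mn})$ and $|E_\ell| = \tilde{O}(\sqrt{mn})$, each failing with probability at most $1/n$. Under this conditioning, Step~1 sends $\tilde{O}(n)$ messages (Theorem \ref{thm:dgs}); Step~5 sends $\tilde{O}(|E(H)|) = \tilde{O}(\sqrt{mn})$ messages; Step~6 sends $\tilde{O}(\sqrt{mn})$ messages by Theorem \ref{thm:computeF} with $p = \sqrt{n/m}$; and Step~7 sends $\tilde{O}(|E_\ell|) = \tilde{O}(\sqrt{mn})$ messages. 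A union bound over the failure events of Lemma \ref{lemma:hsize}, Theorem \ref{thm:computeF}, and the two invocations of \textsc{LinearMessages-MST} (each failing with probability at most $1/\poly(n)$) keeps the overall failure probability at $O(1/n)$, so the total message complexity is $\tilde{O}(\sqrt{mn})$ w.h.p. The main technical subtlety --- not a real obstacle since it is already done in Lemma \ref{lemma:hsize} --- is that the sampling in Step~3 is only $\Theta(\log n)$-wise independent, which forces the use of the $k$-wise concentration bound of Theorem \ref{thm:schmidt} in place of Chernoff, and the corresponding weakened KKT bound of Corollary \ref{coro:fsize} for $|E_\ell|$.
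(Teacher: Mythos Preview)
Your proposal is correct and follows essentially the same approach as the paper: the paper's proof consists of little more than a pointer to Table~\ref{tab:mst1} (together with Lemma~\ref{lemma:hsize}) for the round and message bounds, with correctness implicit from the red-rule discussion in Section~\ref{sub:overview}. Your write-up simply makes explicit the correctness argument and the union bound that the paper leaves to the reader.
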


\subsection{Trading messages and time}
\label{sub:mst2}
The \textsc{MST-v2} algorithm (shown in Algorithm~\ref{algo:mst2}) is a recursive version of \textsc{MST-v1} algorithm 
yielding a time-message trade-off.
The algorithm recurses until the number of edges in the subproblem becomes ``low'' enough to solve it via a call to the \textsc{LinearMessages-MST} subroutine. 
Specifically, we treat a $n$-node graph with $m = O(n^{1+\epsilon})$ edges as a base case.
For graphs with more edges we use a sampling probability of $p = 1/n^\epsilon$, leading to a sparse graph $H$ with $O(m/n^{\epsilon})$ edges w.h.p., which is recursively processed. 
The use of limited independence sampling is critical here.
One simple approach to sampling an edge would be to let the endpoint with higher ID sample the edge and inform the other endpoint \textit{if the outcome is positive}. 
Unfortunately, this would lead to the use of $\ot(m/n^{\epsilon})$ messages w.h.p., exceeding our target of $\ot(n^{1+\epsilon})$ messages when $m$ is large\footnote{This approach would have worked fine for \textsc{MST-v1}, but to keep the two algorithms consistent to the extent possible, we use the $\Theta(\log n)$-wise
independent sampler there as well.}.
Using $\Theta(\log n)$-wise-independent sampling allows us to complete the sampling step using $\ot(n)$ messages. 
\begin{algorithm}[!ht]
  \caption{\textsc{MST-v2} \label{algo:mst2}}
  \begin{algorithmic}[1]
    \Require \requirebox{An edge-weighted $n$-node, $m$-edge graph $G = (V, E, w)$
	\LineComment{Each node knows weights and end-points of incident edges in $G$.
	    Every weight can be represented using $O(\log n)$ bits. 
	    There is a parameter $0 < \epsilon \le 1$, known to all nodes.}}
    \Ensure \ensurebox{An MST $\mathcal{T}$ of $G$. 
	\LineComment{Each node in $V$ knows which of its incident edges are part of $T$.}}
	\textcolor{gray!70}{\algrule}
    \LineComment{Let $v^*$ denote the node with lowest ID in $V$ and $c \geq 1$ is a constant.}
    \If{$m < c \cdot n^{1+\epsilon}$}
	\State $\mathcal{T} \gets \textsc{LinearMessages-MST}(G)$
	\State \Return $\mathcal{T}$
    \Else
	\State $v^*$ generates a sequence $\pi$ of $\Theta(\log^2 n)$ bits independently and uniformly at random and shares with all nodes in $V$ 
	\State $p \gets 1/n^\epsilon$
	\State Each node constructs an $\Theta(\log n)$-wise-independent sampler from $\pi$ and uses this to sample each incident edge in $G$ with probability $p$
	\State $H \gets$ the spanning subgraph of $G$ induced by the sampled edges
	\State $F \gets \textsc{MST-v2}(H)$
	\State $E_\ell \gets \textsc{Compute-F-Light}(G, F, p)$ 
	\State $\mathcal{T} \gets \textsc{LinearMessages-MST}((V, E_\ell, w))$
	\State \Return $\mathcal{T}$ 
    \EndIf
  \end{algorithmic}
\end{algorithm}

\begin{theorem}
  Algorithm \textsc{MST-v2} outputs an MST of an edge-weighted $n$-node, $m$-edge graph when terminates. 
  Moreover, for any $\epsilon > 0$, it terminates after $O\left({\log^* n}/{\epsilon}\right)$ rounds and uses $\ot\left({n^{1+\epsilon}}/{\epsilon}\right)$ messages, w.h.p. 
\end{theorem}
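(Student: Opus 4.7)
The plan is to prove correctness by induction on the recursion depth, and then to bound the time and message complexity by analyzing the depth of recursion together with the per-level cost.

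For correctness, the base case $m < c \cdot n^{1+\epsilon}$ is immediate from the correctness of \textsc{LinearMessages-MST}. For the recursive case, assume inductively that the call $\textsc{MST-v2}(H)$ returns a (maximal) minimum weight spanning forest $F$ of $H$. The KKT Sampling Lemma (extended to $\Theta(\log n)$-wise-independent sampling via Corollary~\ref{coro:fsize}) implies that every MST edge of $G$ is $F$-light, so $\textsc{MST}(G) = \textsc{MST}((V, E_\ell, w))$, and the final call to \textsc{LinearMessages-MST} on $(V, E_\ell, w)$ returns the desired tree.

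For the complexity bounds, I would first show that the recursion has depth $O(1/\epsilon)$. Let $m_k$ denote the edge count of the subproblem at recursion depth $k$; then $\E[m_{k+1}] = m_k / n^\epsilon$, and by Theorem~\ref{thm:schmidt} applied to the $\Theta(\log n)$-wise-independent sampling (exactly as in Lemma~\ref{lemma:hsize}), $m_{k+1} \le c' \cdot m_k / n^\epsilon$ with probability at least $1 - 1/n^c$ for any desired constant $c$. Since $m_0 \le \binom{n}{2}$, an easy induction shows that $m_k \le n^{1+\epsilon}$ once $k \ge (1-\epsilon)/\epsilon$, so the base case is reached within $O(1/\epsilon)$ recursive calls.

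Next, I would tabulate the per-level cost, analogous to Table~\ref{tab:mst1}. At each non-base level: sharing $\pi$ costs $\ot(n)$ messages and $O(1)$ rounds by Theorem~\ref{thm:dgs}; sampling is local; the call to \textsc{Compute-F-Light} costs $\ot(n/p) = \ot(n^{1+\epsilon})$ messages and $O(1)$ rounds by Theorem~\ref{thm:computeF}; and the final call to \textsc{LinearMessages-MST} costs $\ot(|E_\ell|) = \ot(n^{1+\epsilon})$ messages and $O(\log^* n)$ rounds by Theorem~\ref{thm:linearMST}, where the bound on $|E_\ell|$ again comes from Corollary~\ref{coro:fsize}. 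The base case costs $\ot(m) = \ot(n^{1+\epsilon})$ messages and $O(\log^* n)$ rounds. Summing across $O(1/\epsilon)$ levels gives $O(\log^* n / \epsilon)$ rounds and $\ot(n^{1+\epsilon}/\epsilon)$ messages.

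The only subtle point, and what I expect to be the main obstacle, is properly propagating the high-probability bounds across the $O(1/\epsilon)$ recursive levels: each level invokes a constant number of randomized subroutines (sampling concentration, KKT with $\Theta(\log n)$-wise independence, \textsc{Compute-F-Light}), and each succeeds with probability at least $1 - 1/\poly(n)$. A union bound over $O(1/\epsilon) \cdot O(1) = \poly\log n$ events (since $\epsilon > 1/\log n$ is the regime of interest; otherwise the algorithm statement is vacuous) preserves the $1 - 1/n^{\Omega(1)}$ success probability. Care must also be taken that the seed $\pi$ is regenerated independently at each level so that the sampling bounds at different levels are not correlated with earlier runs of \textsc{Compute-F-Light}; this is exactly what Line~5 inside the recursive branch does.
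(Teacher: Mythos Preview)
Your proposal is correct and follows essentially the same approach as the paper: bound the recursion depth by $O(1/\epsilon)$ using the fact that each level shrinks the edge count by a factor of $n^{\epsilon}$, and then sum the per-level $O(\log^* n)$ rounds and $\ot(n^{1+\epsilon})$ messages. The paper's own proof is considerably terser—it simply writes the recurrence $T(m) = T(m/n^{\epsilon}) + O(\log^* n)$ and asserts the solution—whereas you spell out the correctness induction, the concentration-based edge-count bound at each level, and the union bound over levels, all of which the paper leaves implicit.
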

\begin{proof}
  If $m = O(n^{1+\epsilon})$ then the claim follows from Theorem~\ref{thm:linearMST}. 
  Let $T(m)$ denote the time required for Algorithm~\ref{algo:mst2} to compute an MST of a $n$-node, $m$-edge graph.
  Since $\textsc{Compute-F-Light}(\cdot)$ runs in $O(1)$ time and $\textsc{LinearMessages-MST}(\cdot)$ runs in $O(\log^* n)$ time, we see that,  
  $T(m) = T(m/n^{\epsilon}) + O(\log^* n)$, for all large $m$. 
  The first quantity is the result of a recursive call on the sampled graph $H$, where each edge is sampled with probability $p = 1/n^{\epsilon}$. 
  Solving this recursion with base case $m = O(n^{1+\epsilon})$, we get $T(m) = O(\log^* n / \epsilon)$. 
  The message complexity bound is obtained by similar arguments. 
\end{proof}
Setting $\epsilon = \log\log n / \log n$, we get the following result. 
\begin{corollary}
  There exists an algorithm that computes an MST of an $n$-node, $m$-edge input graph and w.h.p. terminates in $O(\log n \cdot \log^* n / \log \log n)$ rounds and $\ot(n)$ messages.
\end{corollary}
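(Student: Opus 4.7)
The plan is to obtain the corollary as a direct instantiation of the preceding theorem about \textsc{MST-v2} with the specific choice $\epsilon = \log\log n/\log n$. Since the theorem already certifies correctness for every $0 < \epsilon \le 1$, and this choice satisfies that range for all sufficiently large $n$, there is no additional algorithmic work; the task reduces to a clean calculation of the two complexity expressions $O(\log^* n/\epsilon)$ and $\ot(n^{1+\epsilon}/\epsilon)$ under this substitution.

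First, I would compute $n^{\epsilon}$ using the identity $n^{\epsilon} = 2^{\epsilon \log n}$. With $\epsilon \log n = \log\log n$ this yields $n^{\epsilon} = \log n$, and hence $n^{1+\epsilon} = n \log n$. Combining with $1/\epsilon = \log n / \log\log n$ gives message complexity
\[
\ot\!\left(\frac{n^{1+\epsilon}}{\epsilon}\right) \;=\; \ot\!\left(\frac{n \log^2 n}{\log\log n}\right) \;=\; \ot(n),
\]
where the last equality absorbs the polylogarithmic factors into the $\ot$ notation.

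Next, the round complexity is obtained by direct substitution:
\[
O\!\left(\frac{\log^* n}{\epsilon}\right) \;=\; O\!\left(\frac{\log^* n \cdot \log n}{\log\log n}\right),
\]
which is the stated bound. Correctness of the output and the high-probability guarantees transfer verbatim from the theorem, since both hold uniformly in $\epsilon$.

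There is no real obstacle here; the only thing to verify is that $\epsilon = \log\log n/\log n$ lies in $(0,1]$ for $n$ large enough, which is immediate, and that the constant hidden in the base case threshold $c \cdot n^{1+\epsilon}$ of \textsc{MST-v2} does not interfere. The latter is absorbed into the polylog factor of $\ot(\cdot)$ for the messages and into the constant inside $O(\cdot)$ for the rounds. For small $n$ the corollary is trivial since any algorithm uses at most $O(n^2)$ messages and a constant is absorbed by $\ot$. Thus the corollary follows.
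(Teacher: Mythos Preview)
Your proposal is correct and matches the paper's approach exactly: the paper simply states that setting $\epsilon = \log\log n / \log n$ in the preceding theorem yields the corollary, and your calculation carries out precisely that substitution. If anything, you have supplied more detail than the paper, which gives no proof beyond the one-line remark.
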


\section{Efficient Computation of \texorpdfstring{$F$}{F}-light Edges}
\label{sec:computeF}
In this section we describe the \textsc{Compute-F-Light} algorithm and prove its correctness and analyze its time and message complexity. 
The inputs to this algorithm are the graph $G$, a spanning forest $F$ of $G$, and a probability $p$. 
Recall that $F$ is the maximal minimum weight spanning forest of the subgraph $H$ obtained by sampling edges in $G$ with probability $p$, using a $\Theta(\log n)$-wise-independent sampler.
The main ideas in \textsc{Compute-F-Light} have been informally described in Section~\ref{sub:overview}.
The \textsc{Compute-F-Light} algorithm is described below in Algorithm~\ref{algo:f-light}.
\begin{algorithm}[!ht]
  \caption{\textsc{Compute-F-Light} \label{algo:f-light}}
  \begin{algorithmic}[1]
    \Require \requirebox{(i) An edge-weighted $n$-node, $m$-edge graph $G = (V, E, w)$, 
	(ii) A spanning forest $F$ of $G$, and 
	(iii) a number $p$, $0 < p < 1$. 
	\LineComment{$F$ is a maximal minimum weight spanning forest of a subgraph $H$ of $G$, 
	    where $H$ is a spanning subgraph of $G$ obtained by sampling each edge in $G$ with probability $p$ using a $\Theta(\log n)$-wise-independent sampler. 
	    Each node knows weights and end-points of incident edges from $G$ and $F$. 
	    Every weight can be represented using $O(\log n)$ bits.}}
    \Ensure \ensurebox{$F$-light edges of $G$. 
	\LineComment{Each node in $V$ knows which of its incident edges from $G$ are $F$-light.}}
	\textcolor{gray!70}{\algrule}
    \State Let $\{v_1, v_2, \ldots, v_c\}$ be set of \emph{commander nodes} (or in short, \emph{commanders}) where $c = \Theta(\log n)$. 
	Gather $F$ at each of these commanders. 
    \State Each commander simulates Bor\u{u}vka's algorithm locally on input graph $F$. 
	Let $\mathcal{C}^i = \{C^i_1, C^i_2, \ldots\}$ be the set of components at the beginning of Phase $i$. 
	The node with smallest ID in a component $C^i_j$ is the \textit{leader} of component $C^i_j$ and
	the ID of the leader serves as the label of each component.
	For each component $C^i_j \in \mathcal{C}^i$, the algorithm picks a MWOE $e^i_j$ from $F$.  
	Components are merged and we get a new set of components $\mathcal{C}^{i+1}$.
        If there is no incident edge on a component $C^i_j$ in $F$ then commander sets $e^i_j = \bot$ with the understanding that $w(\bot) = \infty$. 
    \State For each component $C^i_j$, commander $v_i$ sends the following 3-tuple to each node in $C^i_j$:
      \Statex{\hspace{\algorithmicindent}} (a) Phase number $i$,
      	(b) label of $C^i_j$, and  
	(c) $w(e^i_j)$.
    \State A node $v$ having received a 3-tuple $(i, \ell, w')$ associated with component $C^i_j$ for some $i$ and $j$ 
	computes $\Theta\left(\frac{\log^5 n}{p}\right)$ different graph sketches with respect to its $w'$-restricted neighborhood $N_{w'}(v)$. 
    \State The component leader of $C^i_j$ for each $i$ and $j$, gathers $\Theta\left(\frac{\log^5 n}{p}\right)$ $w(e^i_j)$-restricted sketches from all the nodes in $C^i_j$ and computes $w(e^i_j)$-restricted sketches of $C^i_j$. 
    	Then it samples an edge from each sketch computed and notifies the end-points of all sampled edges. 
    \State \Return Union of sampled edges over all $i$ over all $j$. 
  \end{algorithmic}
\end{algorithm}
\subsection{Analysis} 
Let $\mathcal{C}^i = \{C^i_1, C^i_2, \ldots\}$ be the set of components at the beginning of Phase $i$ of \bor algorithm being locally  simulated on $F$. 
Consider the set of edges from $G$ with exactly one endpoint in $C^i_j$ with weight at most $w(e^i_j)$: 
$L^i_j = \{e=\{u, v\} \in E \mid u \in C^i_j, v \notin C^i_j \mbox { and } w(e) \leq w(e^i_j) \}$. 
For example, see Figure~\ref{fig:computeF}.
\begin{figure}[t]
  \centering
  \includegraphics[width=\textwidth]{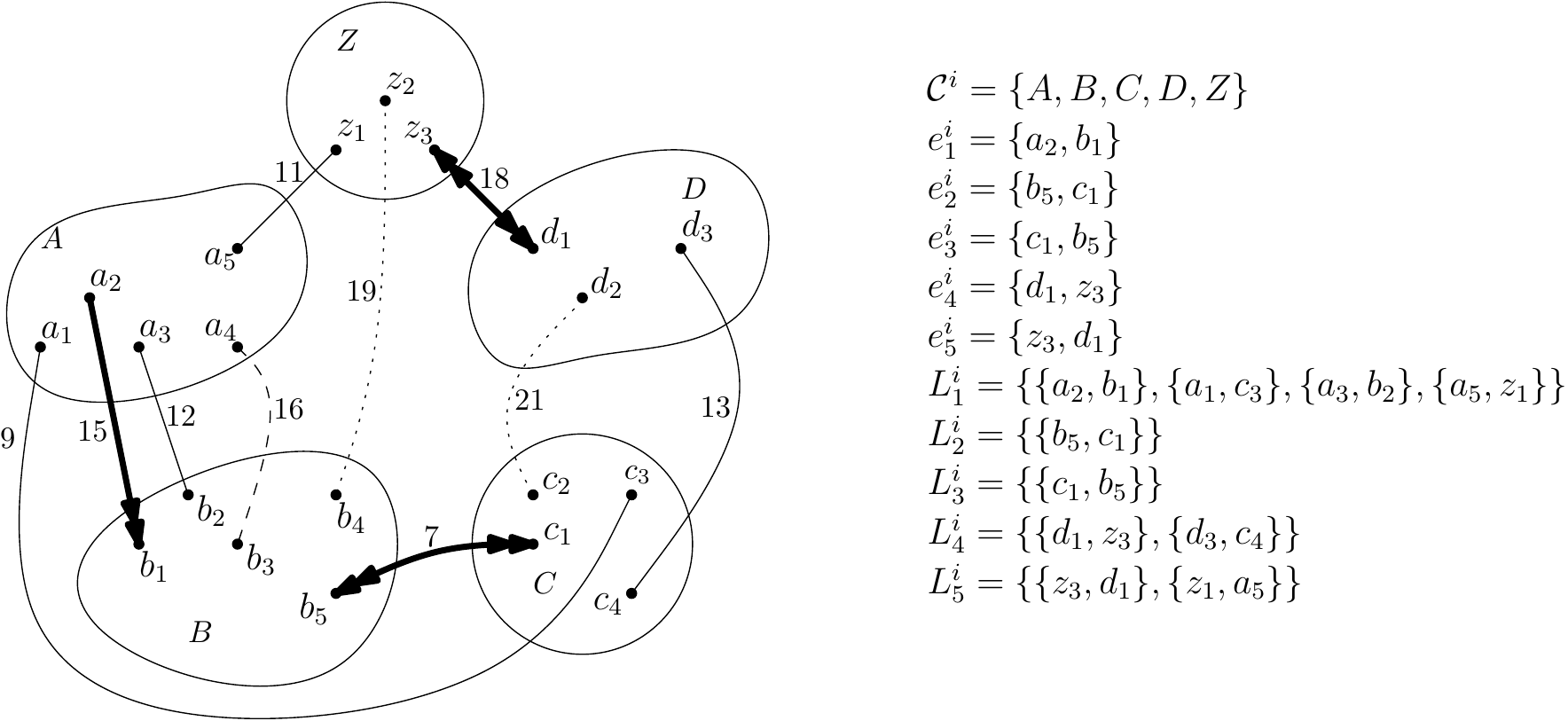}
  \caption{Illustration of notation and terminology used in Algorithm~\ref{algo:f-light}     \textsc{Compute-F-Light}. 
    At the beginning of Phase $i$ of \bor algorithm, there are 5 components $\{A, B, C, D, Z\}$. 
    Each component's MWOE in $F$ is shown as thick directed arc. 
    Solid arcs show edges in $G$ that are in respective $L^i_j$'s and hence identified as being $F$-light. 
    Dashed arcs (e.g., $a_4b_3$) represent edges that the algorithm ignores; these edge are not $F$-light.
    Dotted arcs (e.g., $b_4z_2, c_2d_2$) represent edges in $G$ whose status has not yet been resolved by the algorithm. 
    After the merging of components is completed, we end up with two components $\{ABC, DZ\}$. 
    \label{fig:computeF}}
\end{figure}
Our first task is to bound the size of $L^i_j$ and for this we appeal to the following lemma from Pettie and  Ramachandran~\cite{PettieRamachandran} on sampling from an ordered set. 
\begin{lemma}[Pettie \& Ramachandran~\cite{PettieRamachandran}] \label{lemma:PR}
Let $\chi$ be a set of $n$ totally ordered elements and $\chi_p$ be a subset of $\chi$, 
derived by sampling each element with probability $p$ using a $k$-wise-independent sampler. 
Let $Z$ be the number of unsampled elements less than the smallest element in $\chi_p$. Then
$\E[Z] \leq  p^{-1}(8(\pi/e)^2 + 1)$ for $k \geq 4$.
\end{lemma}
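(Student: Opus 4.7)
The idea is to write $\E[Z]$ as a tail sum and then bound each tail probability using the $k$-wise independent concentration inequality in Theorem~\ref{thm:schmidt}. Fix an ordering $x_1 < x_2 < \cdots < x_n$ of $\chi$, and let $Y_j \in \{0,1\}$ indicate whether $x_j$ is sampled. By hypothesis the $Y_j$ are $k$-wise independent Bernoullis of mean $p$. The event $\{Z \geq i\}$ is precisely $\{Y_1 = \cdots = Y_i = 0\}$, i.e., $\{T_i = 0\}$, where $T_i := \sum_{j=1}^{i} Y_j$. Hence
\[
\E[Z] \;=\; \sum_{i=1}^{n} \Pr(T_i = 0).
\]

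I would split this sum at a threshold $i^\star = \Theta(1/p)$ chosen so that $\E[T_{i^\star}] = i^\star p$ is a large enough constant for Theorem~\ref{thm:schmidt} to be non-trivial. For $i \leq i^\star$, use the trivial bound $\Pr(T_i = 0) \leq 1$, contributing at most $i^\star = O(1/p)$ to the sum. For $i > i^\star$, note that $T_i = 0$ implies $|T_i - \E[T_i]| = ip$, so Theorem~\ref{thm:schmidt} applied with $C = T = ip$ and $k = 4$ gives
\[
\Pr(T_i = 0) \;\leq\; \sqrt{2}\,\cosh\!\left(\sqrt{k^3/(36\,ip)}\right) \cdot \left(\frac{k}{e\,ip}\right)^{k/2} \;=\; O\!\left(\frac{1}{(ip)^2}\right),
\]
where the $\cosh$ factor is $O(1)$ once $ip$ is a sufficiently large constant.

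Summing the tail gives $\sum_{i > i^\star} 1/(ip)^2 = (1/p^2) \sum_{i > i^\star} 1/i^2$. The particular constant $8(\pi/e)^2$ then arises from plugging $k = 4$ into $(k/e)^{k/2} = (4/e)^2 = 16/e^2$ and using $\sum_{i \geq 1} 1/i^2 = \pi^2/6$ (together with the $\sqrt{2}$ from Theorem~\ref{thm:schmidt} and a careful choice of $i^\star$); combining this with the $1/p$ from the trivial range, which accounts for the $+1$ summand, yields $\E[Z] \leq (8(\pi/e)^2 + 1)/p$.

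The main obstacle is the regime in which $\E[T_i]$ is below a constant: here the $k$-wise independent concentration bound does not improve on the trivial estimate $\Pr(T_i = 0) \leq 1$, so one is forced to pay $\Theta(1/p)$ for these $\Theta(1/p)$ values of $i$. This is precisely why the assumption $k \geq 4$ is both necessary and sufficient: with $k = 2$ (pairwise independence and Chebyshev) one only obtains $\Pr(T_i = 0) = O(1/(ip))$, and the divergent harmonic sum would produce a spurious $\log n$ factor in the bound; with $k \geq 4$ the tail probability decays like $1/(ip)^2$ and the sum converges, which is what gives a bound on $\E[Z]$ that is independent of $n$.
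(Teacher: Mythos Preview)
The paper does not actually prove this lemma: it is quoted verbatim from Pettie and Ramachandran and used as a black box, so there is no ``paper's own proof'' to compare against. That said, your proof plan is correct and is in fact the argument that Pettie and Ramachandran give. It is also exactly the template the present paper uses for the closely related Lemma~\ref{lemma:lsize}: write the tail event $\{Z \ge \ell\}$ as $\{S_\ell = 0\}$ for $S_\ell = \sum_{t \le \ell} Y_t$, and then invoke the Schmidt--Siegel--Srinivasan bound (Theorem~\ref{thm:schmidt}) on the deviation $|S_\ell - p\ell| \ge p\ell$. The only difference is that Lemma~\ref{lemma:lsize} wants a single high-probability tail bound (with $k = \Theta(\log n)$), whereas here you sum the tail probabilities over $\ell$ to get the expectation, and $k = 4$ already makes that sum converge.

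One small caveat: your derivation of the exact constant $8(\pi/e)^2$ is a bit hand-wavy --- the combination of $\sqrt{2}$, the $\cosh$ factor, $(4/e)^2$, $\pi^2/6$, and the integral tail $\sum_{i > i^\star} i^{-2} \approx 1/i^\star$ does not literally multiply out to $8(\pi/e)^2$ without some additional care in how $i^\star$ is chosen and how the $\cosh$ term is bounded for moderate $ip$. This does not affect the $O(1/p)$ conclusion, which is all the paper needs; if you want the precise constant you should look at the original argument in~\cite{PettieRamachandran}.
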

Observe that a straight-forward application of the above lemma gives us $\E[|L^i_j|] = O(1/p)$. 
In the next lemma, we modify the proof of Lemma~\ref{lemma:PR} in Pettie \& Ramachandran~\cite{PettieRamachandran} to obtain a bound on size of $L^i_j$ that holds w.h.p. 
\begin{lemma} \label{lemma:lsize}
  $\Pr\left(\mbox{There exist $i$ and $j$:} |L^i_j| > c\cdot \log^3 n / p\right) < \frac{1}{n}$ for some constant $c > 1$.
\end{lemma}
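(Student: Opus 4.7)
The plan is to: (i) identify $|L^i_j|$ with Pettie-Ramachandran's $Z{+}1$ quantity via the cycle property of MSFs; (ii) strengthen the expectation bound of Lemma~\ref{lemma:PR} into a high-probability tail bound, exploiting $\Theta(\log n)$-wise independence through Theorem~\ref{thm:schmidt}; and (iii) union-bound over the $O(n)$ Bor\u{u}vka components that arise from the random $F$.

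For (i), fix $(i,j)$ and list the $G$-edges crossing the cut $(C^i_j, V\setminus C^i_j)$ in increasing weight order as $g_1<g_2<\cdots$. I would observe that $e^i_j$ coincides with the minimum-weight \emph{sampled} $G$-edge across the cut: otherwise a lighter sampled $g_k\in H$ together with $F$ would form a cycle in $H$ that, since it must cross the cut an even (hence $\ge 2$) number of times, would contain another $F$-edge across the cut of weight at most $w(g_k)<w(e^i_j)$, contradicting the minimality of $e^i_j$. Consequently $|L^i_j|$ equals the rank of the first sampled element in $g_1, g_2, \ldots$, which is exactly $Z+1$ in the notation of Lemma~\ref{lemma:PR} applied to $\chi=\{g_1,g_2,\ldots\}$.

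For (ii), fix any cut $S$, set $T = c\log^3 n / p$ for a large constant $c$, and let $X_T = \sum_{k=1}^T \mathbf{1}[g_k(S)\in H]$. The indicators are $\Theta(\log n)$-wise independent with $\E[X_T] = pT$, and $\{|L_S|>T\}=\{X_T=0\}$, which is a lower-tail deviation of $pT$ from the mean. Applying Theorem~\ref{thm:schmidt} with $k=\Theta(\log n)$ and $C=T\ge \E[X_T]$, choosing $c$ large enough makes $pT\gg k^3$, so the $\cosh$ factor is $O(1)$ and the moment factor becomes $\bigl(k/(epT)\bigr)^{k/2}=\bigl(O(1/\log^2 n)\bigr)^{\Theta(\log n)}=n^{-\Theta(\log\log n)}$. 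Thus $\Pr(|L_S|>T)\le n^{-\Theta(\log\log n)}$ for every fixed $S$, comfortably beating any $1/n^c$.

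For (iii), I would combine these per-cut bounds via a first-moment bound
\[
  \Pr\!\bigl(\exists S\in\mathcal{B}(F):|L_S|>T\bigr)\le \sum_S\Pr\!\bigl(S\in\mathcal{B}(F)\text{ and }|L_S|>T\bigr),
\]
and charge each nontrivial $S\in\mathcal{B}(F)$ to its MWOE $e^i_j\in F$; since $|F|\le n-1$ and each $F$-edge is the MWOE of at most two Bor\u{u}vka components (one per side of the cut), $|\mathcal{B}(F)| = O(n)$ deterministically. Multiplied by the per-cut tail bound of (ii), this gives a total failure probability of at most $O(n)\cdot n^{-\Theta(\log\log n)} = o(1/n)$. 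The main technical obstacle is that the events $\{S\in\mathcal{B}(F)\}$ and $\{|L_S|>T\}$ are coupled through the same random sampling, so the family of cuts we wish to union-bound over is itself random; I expect the cleanest fix is a deferred-decisions argument — reveal $F$ first so that the $O(n)$ cuts become deterministic, then re-invoke the tail bound of (ii) on the residual randomness, verifying that the $\Theta(\log n)$-wise independence needed for Theorem~\ref{thm:schmidt} is preserved under the conditioning.
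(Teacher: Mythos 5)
Steps (i) and (ii) of your plan exactly reproduce the paper's argument: for a fixed cut, $|L^i_j|$ equals the rank of the lightest sampled cut edge (you argue this via the cycle property; the paper invokes the MWOE choice in Bor\u{u}vka --- same fact), the tail event $\{|L^i_j|>T\}$ is recast as $\{S_T=0\}$ for a sum of $T$ $\Theta(\log n)$-wise-independent $0$-$1$ indicators, and Theorem~\ref{thm:schmidt} is applied with $C=T$. The paper's union bound is then simply over all $(i,j)$ pairs, of which there are $O(n\log n)$, and its per-pair bound of $n^{-3}$ suffices (your stronger $n^{-\Theta(\log\log n)}$ is also fine but unnecessary).

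Where your proposal departs --- and where it has a genuine gap --- is step (iii). You are right to flag that $\{S\in\mathcal{B}(F)\}$ and $\{|L_S|>T\}$ are coupled through the same random sampling, and indeed positively correlated (both are favored when few light cut edges are sampled), so that applying Theorem~\ref{thm:schmidt} to the random cut $C^i_j$ as though it were fixed requires justification; the paper does exactly this without comment. But the repair you sketch cannot work. Once $F$ is revealed, the Bor\u{u}vka components $C^i_j$, their MWOEs $e^i_j$, and therefore every $|L^i_j|$ are deterministic functions of $F$ and the input --- there is no ``residual randomness'' left over which to re-invoke a tail bound, so ``reveal $F$, then apply Theorem~\ref{thm:schmidt}'' is vacuous. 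A correct repair must instead control the conditional law of the cut-edge sampling indicators \emph{given} that a particular $C$ arises as a Bor\u{u}vka component (which is not the unconditional product law Theorem~\ref{thm:schmidt} assumes), for example by a deferred-decisions coupling that reveals sampling outcomes in increasing weight order as a virtual Kruskal/Bor\u{u}vka process queries them. As written, your step (iii) names the obstacle but does not resolve it, and your final conclusion does not follow from the argument you give.
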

\begin{proof}
  Fix a Phase $i$ and a component $C^i_j$ in that phase.
  Let $X$ be the set of all edges from $G$ having exactly one endpoint in $C^i_j$.
  Let $X_t$ be an indicator random variable defined as $X_t = 1$ if the $t^{th}$ smallest edge in $X$ is sampled, and 0 otherwise.
  For any integer $\ell$, $1\leq \ell \leq |X|$, let $S_\ell = \sum_{t=1}^\ell X_t$ count the number of ones in $X_1, \ldots, X_\ell$.
  Note that $L^i_j \subseteq X$ is a set of all edges with weight at most $e^i_j$, the MWOE from $C^i_j$ in $F$. 
  This implies that the lightest edge in $X$ that is sampled is $e^i_j$, otherwise \bor algorithm would have chosen a different MWOE. 
  In other words, $X_k = 0$ for all $k \leq \ell$ if the rank of $e^i_j$ in the ordered set $X$ is $\ell + 1$ or more. 
  Therefore, $\Pr\left(|L^i_j| > \ell\right) = Pr(S_\ell = 0)$. 

  Observe that, $S_\ell$ is a sum of 0-1 random variables which are $\Theta(\log n)$-wise-independent and $\E[S_\ell] = p\ell$.
  By Theorem~\ref{thm:schmidt}, we have $\Pr(S_\ell = 0) < \frac{1}{n^3}$ for $\ell > c \cdot \log^3 n / p$ for some constant $c > 1$. 
  The lemma follows by applying union bound over all phases and components. 
\end{proof}
\begin{lemma} \label{lemma:lsample}
  For any Phase $i$ and any component-MWOE pair $(C^i_j, e^i_j)$, 
  w.h.p. $O\left(\log^5 n / p\right)$ $w(e^i_j)$-restricted sketches of $C^i_j$ are sufficient to find all edges in $L^i_j$.
\end{lemma}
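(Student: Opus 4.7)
The plan is to combine the component-wise upper bound on $|L^i_j|$ from Lemma~\ref{lemma:lsize} with a Coupon Collector argument applied to independent $\ell_0$-samples drawn from the aggregate $w(e^i_j)$-restricted sketch of $C^i_j$.

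First, I would argue that the aggregate sketch faithfully encodes $L^i_j$. By the linearity property of sketches (Theorem~\ref{thm:sketches}), the coordinate-wise sum of the $w(e^i_j)$-restricted sketches $\mathbf{s}_v$ contributed by the nodes $v \in C^i_j$ is a sketch of the (signed) characteristic vector of $L^i_j$: under the standard $\pm 1$ convention used for connectivity sketches, edges internal to $C^i_j$ are contributed with opposite signs by their two endpoints and cancel, while each external edge of weight at most $w(e^i_j)$ survives with a nonzero coordinate. Theorem~\ref{thm:sketches} then says that an $\ell_0$-sample from this aggregate sketch succeeds with probability at least $1 - n^{-2}$ and, conditioned on success, returns every fixed $e \in L^i_j$ with probability within $[1/|L^i_j| - n^{-2}, 1/|L^i_j| + n^{-2}]$. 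Since Lemma~\ref{lemma:lsize} yields $|L^i_j| = O(\log^3 n /p)$ w.h.p., the $n^{-2}$ additive error is dwarfed by $1/|L^i_j|$, so each successful sample may be treated as a nearly-uniform draw from $L^i_j$.

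Next, I would run a Coupon Collector calculation. Let $R$ denote the number of independent $w(e^i_j)$-restricted sketches that each node in $C^i_j$ contributes; the leader aggregates these to obtain $R$ independent aggregate sketches and therefore $R$ independent sampling attempts. For any fixed $e \in L^i_j$, the probability that no attempt returns $e$ is at most $\exp\!\bigl(-R/|L^i_j|\bigr) + R \cdot n^{-2}$, where the first term handles the conditional-on-success case and the second bounds the total probability that some attempt fails outright. Choosing $R = \Theta(\log^5 n/p)$ makes $R/|L^i_j| = \Omega(\log^2 n)$ and so drives both terms below any fixed inverse polynomial in $n$. A union bound over the at most $O(\log^3 n/p)$ edges of $L^i_j$ shows that all of $L^i_j$ is recovered for this fixed $(i,j)$ with probability $\geq 1 - n^{-c}$ for an arbitrary constant $c$; a second union bound over the $O(n \log n)$ phase-component pairs in the simulation of \bor algorithm then upgrades the statement to hold for every $(i,j)$ simultaneously.

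The main obstacle I expect is accommodating limited independence. Each sketch uses hash functions drawn from only a $\Theta(\log n)$-wise-independent family (to keep message complexity low), and Theorem~\ref{thm:sketches} only guarantees near-uniformity of a single sample, not any kind of joint behavior across multiple sketches. I would handle this by building each of the $R$ sketches with a freshly drawn hash function, using disjoint portions of the shared seed $\pi$; the $R$ resulting sketches are then mutually independent and the per-sketch guarantee of Theorem~\ref{thm:sketches} combines cleanly with the Coupon Collector step above. The extra logarithmic factor in $R$ compared to the textbook Coupon Collector count of $\Theta(\log^4 n/p)$ is slack deliberately left to absorb the per-sketch $n^{-2}$ failure probability and both union bounds.
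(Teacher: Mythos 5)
Your proposal follows essentially the same route as the paper's proof: invoke Lemma~\ref{lemma:lsize} to get $|L^i_j| = O(\log^3 n / p)$ w.h.p., then run a Coupon Collector argument over repeated $\ell_0$-samples, absorbing the per-sketch failure and the $\pm n^{-2}$ non-uniformity of Theorem~\ref{thm:sketches}, and finish with union bounds over edges in $L^i_j$ and over the $O(n\log n)$ phase-component pairs. The paper packages the same computation through an ``almost-uniform oracle'' abstraction and the coupon-collector tail bound $\Pr(T_s > \beta s \log s) < s^{-\beta+1}$ with $\beta = \Theta(\log n)$, whereas you work the exponential bound directly and are more explicit about two points the paper leaves implicit: the cancellation of internal edges in the aggregated linear sketch, and the need for the $R$ sketches to be built from independent hash-function seeds so that the repeated sampling attempts are genuinely independent. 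These additions are correct and arguably make the argument cleaner; they do not change the underlying idea or the $O(\log^5 n / p)$ bound.
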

\begin{proof}
  Consider an oracle which when queried returns an edge in $L^i_j$ independently and uniformly at random. 
  Let $T_s$ denote the number of the oracle queries required to obtain $s = |L^i_j|$ distinct edges (i.e., all edges in $L^i_j$).
  Then by the Coupon Collector argument~\cite{MotwaniRaghavan}, $Pr(T_s > \beta s \log s) < s^{-\beta + 1}$ for any $\beta > 1$.
  Also, if the oracle is not uniform, but is ``almost uniform,'' returning an edge in $L^i_j$ with probability $\frac{1}{s} \pm s^{-\alpha}$ for a constant $\alpha > 2$, then we get $Pr(T_s > \beta s \log s + o(1)) < s^{-\beta + 1}$. 

  Now, to simulate a $t^{th}$ oracle query ($t \in [1, T_s]$) mentioned above, we sample an unused sketch of $C^i_j$ until we get an edge.
  Since sampling from a sketch fails with probability at most $n^{-2}$, w.h.p., $O(1)$ sketches are sufficient to simulate one oracle query. 
  Hence w.h.p., $O(T_s)$ sketches are sufficient to simulate $T_s$ oracle queries.
  Therefore, with probability at least $1 - s^{-\beta + 1}$, $O(\beta s \log s)$ sketches are sufficient to get $s$ distinct edges from $L^i_j$. 

  By Lemma~\ref{lemma:lsize}, we have w.h.p., $s = |L^i_j| = O\left(\log^3 n/p\right)$. 
  Therefore by letting $s = \Theta\left(\log^3 n/p\right)$ and $\beta = O(\log n)$ in the above argument, w.h.p., $O\left(\log^5 n/p\right)$ sketches are sufficient to find all edges in $L^i_j$.
\end{proof}
\begin{lemma} \label{lemma:correctness}
  Let $E_\ell$ be the set of $F$-light edges in $G$. Let $L = \cup_i \cup_j L^i_j$. Then, $E_\ell = L$. 
\end{lemma}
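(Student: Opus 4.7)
The plan is to prove the two inclusions $L \subseteq E_\ell$ and $E_\ell \subseteq L$ separately, in each case leveraging the MWOE-defining property of each $L^i_j$ together with Tarjan's red-rule view of $F$-light edges.

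For the direction $L \subseteq E_\ell$: I will take $e = \{u, v\} \in L^i_j$ with $u \in C^i_j$, $v \notin C^i_j$, and $w(e) \leq w(e^i_j)$, and split on whether $u$ and $v$ lie in the same tree of $F$. If they lie in different trees, then $w_F(u, v) = \infty$ and $e$ is $F$-light trivially. Otherwise the unique $F$-path from $u$ to $v$ must leave $C^i_j$ at some $F$-edge $e^\ast$; since $e^i_j$ is the lightest outgoing edge of $C^i_j$ in $F$, we get $w(e^\ast) \geq w(e^i_j) \geq w(e)$, hence $w_F(u, v) \geq w(e)$ and $e \in E_\ell$. This direction is essentially routine.

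For the direction $E_\ell \subseteq L$: the easy sub-case is when $u$ and $v$ lie in different trees of $F$. Since \bor algorithm on $F$ never merges nodes across trees, at some phase $i$ the Bor\u{u}vka component containing $u$ equals its whole $F$-tree, so by the algorithm's $\bot$-convention its MWOE has weight $\infty \geq w(e)$, while $v$ is permanently in a different Bor\u{u}vka component; hence $e \in L$. The substantive case is when $u$ and $v$ lie in the same tree of $F$. Let $\hat{e}$ be the heaviest edge on the $F$-path between $u$ and $v$, so that $w(\hat{e}) = w_F(u, v) \geq w(e)$, and focus on the phase $\phi(\hat{e})$ at which $\hat{e}$ is added to $F$ as the MWOE of some component $C_{\hat{e}}$. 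Because $C_{\hat{e}}$ is connected in $F$ and has $\hat{e}$ as an outgoing edge, $C_{\hat{e}}$ lies entirely on one side of $\hat{e}$ in $F - \hat{e}$; exactly one of $u, v$, call it $\tilde{v}$, lies on that same side. Once I establish that $\tilde{v} \in C_{\hat{e}}$, the component $C_{\hat{e}}$ separates $u$ from $v$ and its MWOE weight equals $w(\hat{e}) \geq w(e)$, so $e \in L$.

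The main obstacle is precisely this containment $\tilde{v} \in C_{\hat{e}}$. My plan is to assume $\tilde{v} \notin C_{\hat{e}}$ for contradiction and consider the unique $F$-path from $\tilde{v}$'s Bor\u{u}vka component at phase $\phi(\hat{e})$ back to $C_{\hat{e}}$, which stays entirely on $\tilde{v}$'s side of $\hat{e}$. The final edge $e^{\ast\ast}$ on this path is incident on $C_{\hat{e}}$, so $e^{\ast\ast}$ is an outgoing edge of $C_{\hat{e}}$ distinct from $\hat{e}$; the MWOE property then forces $w(e^{\ast\ast}) \geq w(\hat{e})$. On the other hand $e^{\ast\ast}$ also lies on a sub-path of the $u$-to-$v$ $F$-path, and since edge weights are distinct and $\hat{e}$ is the strict maximum on that path, $w(e^{\ast\ast}) < w(\hat{e})$. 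These two bounds are incompatible, so $\tilde{v} \in C_{\hat{e}}$ after all, finishing the inclusion.
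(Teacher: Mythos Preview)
Your proof is correct. The direction $L \subseteq E_\ell$ matches the paper's argument essentially verbatim: any $F$-path leaving $C^i_j$ must use an outgoing $F$-edge of weight at least $w(e^i_j)$.

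For $E_\ell \subseteq L$ you take a genuinely different route. The paper argues by contradiction: it fixes $e=\{u,v\}\in E_\ell\setminus L$, looks at the phase $i$ in which $C^i(u)$ and $C^i(v)$ first merge, lists the Phase-$i$ components $C^i_1,\ldots,C^i_t$ along $F(u,v)$, and does a three-case analysis on the orientation of the MWOE chain to conclude that $w(e)$ exceeds every inter-component edge of $F(u,v)$; then, since the heaviest edge $e_F$ of $F(u,v)$ satisfies $w(e_F)>w(e)$, $e_F$ must be intra-component, which contradicts the choice of $i$. Your argument instead locates the witnessing component directly: you take the heaviest edge $\hat{e}$ on $F(u,v)$, go to the phase at which $\hat{e}$ is some component $C_{\hat e}$'s MWOE, and show (by a short contradiction using only the MWOE minimality of $\hat{e}$ and the strict maximality of $\hat{e}$ on $F(u,v)$) that the endpoint of $e$ on $C_{\hat e}$'s side of $\hat{e}$ already lies in $C_{\hat e}$. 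This is more economical---it bypasses the case analysis on the MWOE chain---while the paper's version makes more visible how all the $L^i_j$'s along the path relate to $e$. Both are valid; yours is the shorter proof.
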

\begin{proof}
  We first show that $L \subseteq E_\ell$. 
  Consider a Phase $i$ and a component-MWOE pair $(C^i_j, e^i_j)$.
  Consider any edge $e = \{u, v\} \in L^i_j$ with $u \in C^i_j, v\notin C^i_j$. 
  Since $e^i_j$ is the MWOE from $C^i_j$ and $u \in C^i_j$, any path in $F$ connecting $u$ to any node $x \notin C^i_j$ has to go through edge $e^i_j$. 
  Therefore, for any $x \notin C^i_j, w_F(u, x) \geq w(e^i_j)$. 
  Since $v \notin C^i_j$ we have $w_F(u, v) \geq w(e^i_j)$. 
  Moreover, since $e \in L^i_j$, we have $w(e) \leq w(e^i_j)$ implies $w(e) \leq w_F(u, v)$.
  Hence, $e$ is $F$-light.
  Since this is true for any $e \in L^i_j$, we have $L^i_j \subseteq E_\ell$.  
  Hence, $L \subseteq E_\ell$.  

  Now, we show that $E_\ell \subseteq L$. 
  For any node $u \in V$, let $C^q(u)$ denote the component containing $u$ just before
  Phase $q$ of \bor algorithm (Step 2 in Algorithm \textsc{Compute-F-Light}). 
  For the sake of contradiction, let there be an edge $e = \{u, v\} \in E_\ell \setminus L$.  
  Let $i$ be the index of the phase in which component of $u$ and component of $v$ is merged together\footnote{If $u$ and $v$ are never merged
  into one component, i.e., they are in different components in $F$ then $\{u, v\} \in L^i_j$ where $i$ is the phase in which $u$'s component 
  becomes maximal with respect to $F$ and $j$ is such that $u$ belongs to $C^i_j$. This follows from the fact that $e^i_j = \bot$ and $w(e^i_j) = \infty$.} 
  (that is, for any $q < i+1$, $C^q(u) \neq C^q(v)$ and $C^{i+1}(u) = C^{i+1}(v)$). 
  Consider the path $F(u, v)$ and note that since $C^{i+1}(u) = C^{i+1}(v)$, the entire path $F(u, v)$ is in $C^{i+1}(u)$. 
  Now consider the Phase $i$ components $C^i_1, \ldots, C^i_t$, $t \geq 2$ along this path $F(u, v)$ (see Figure~\ref{fig:correctness}). 
  WLOG, let $u \in C^i_1$ and $v \in C^i_t$ and suppose that the path $F(u, v)$ visits the components in the order 
  $u \in C^i_1, C^i_2, \ldots, C^i_{t-1}, v \in C^i_t$. 
  For example, in Figure~\ref{fig:correctness} the path $F(u, v)$ starts in $C^i_1$ then goes through $C^i_2$, then to $C^i_3$, and finally to $C^i_4$. 
  Let $F'(u, v)$ denote the subset of edges in $F(u, v)$ that have endpoints in two distinct Phase $i$ components.

  Now consider the MWOE's of these components: $e^i_j$ is the MWOE for $C^i_j$ for $j = 1, 2, \ldots, t$. 
  There are three cases depending on how the MWOEs $e^i_j$ relate to the path $F(u, v)$.
  \begin{itemize}
  \item $e^i_j$ connects $C^i_j$ to $C^i_{j+1}$ for $j = 1, 2, \ldots, t-1$.
  Since $e$ has exactly one endpoint in  $C^i_1$ and $e \notin L^i_1$ (since $e\notin L$), we have $w(e) > w(e^i_1)$.
  Furthermore, due to the structure of the MWOEs: $w(e^i_1) > w(e^i_2) > \cdots > w(e^i_{t-1})$.
  This implies that $w(e)$ is larger than the weights of all edges in $F'(u, v)$.

  \item $e^i_j$ connects $C^i_j$ to $C^i_{j-1}$ for $j = 2, \ldots, t$.
  Since $e$ has exactly one endpoint in $C^i_t$ and $e \notin L^i_t$ (since $e\notin L$), we have $w(e) > w(e^i_t)$. 
  Furthermore, due to the structure of the MWOEs: $w(e^i_t) > w(e^i_{t-1}) > \cdots > w(e^i_2)$.
  This implies that $w(e)$ is larger than the weights of all edges in $F'(u, v)$.

  \item There is some $\ell$, $1 \le \ell < t$ such that
  $e^i_j$ connects $C^i_j$ to $C^i_{j+1}$ for $j = 1, 2, \ldots, \ell$ and
  $e^i_j$ connects $C^i_j$ to $C^i_{j-1}$ for $j = \ell+1, \ldots, t$.
  This case is illustrated in Figure~\ref{fig:correctness} with $\ell = 2$.
  In this case, $w(e) > w(e^i_1)$ and $w(e) > w(e^i_t)$ for reasons mentioned in the previous two cases.
  Furthermore, due to the structure of the MWOEs: $w(e^i_1) > w(e^i_2) > \cdots > w(e^i_{\ell})$ and
  $w(e^i_t) > w(e^i_{t-1}) > \cdots > w(e^i_{\ell+1})$.
  This implies that $w(e)$ is larger than the weights of all edges in $F'(u, v)$.
  \end{itemize}
  Thus in all three cases, $w(e)$ is larger than the weights of all edges in $F'(u, v)$. 
  Now let $e_F =\{u', v'\} \in F$ be the maximum weight edge in $F(u, v)$.
  Since $e$ is $F$-light, we have $w(e) < w(e_F)$. 
  This inequality combined with the fact that $w(e)$ is larger than the weights of all edges in $F'(u, v)$ implies
  that $u'$ and $v'$ belong to the same Phase $i$ component, i.e., $C^i(u') = C^i(v')$. 
  For example, in Figure~\ref{fig:correctness}, $u'$ and $v'$ are in $C^i_2$. 
  
  Let $C^i(u') = C^i(v') = C^i_\ell$ for some $\ell \leq t$. 
  Let $F(u, v) = F(u, u') \cup \{u', v'\} \cup F(v', v)$. 
  Since $e_F$ is the heaviest edge in $F(u, v)$, all the edges in $F(u, u')$ are lighter than $e_F$. 
  Hence at any Phase $i' < i$, \bor algorithm considers edges in $F(u, u')$ for component $C^{i'}(u')$ and edges in $F(v', v)$ for component $C^{i'}(v')$ before considering $e_F$. 
  The implication of this is, $C^{i}(u) = C^{i}(u')$ and $C^{i}(v) = C^{i}(v')$. 
  But, $C^{i}(u) \neq C^{i}(v)$ therefore, $C^{i}(u') \neq C^{i}(v')$  -- a contradiction. 
\end{proof}
\begin{figure}[tb]
  \centering
  \includegraphics[scale=0.5]{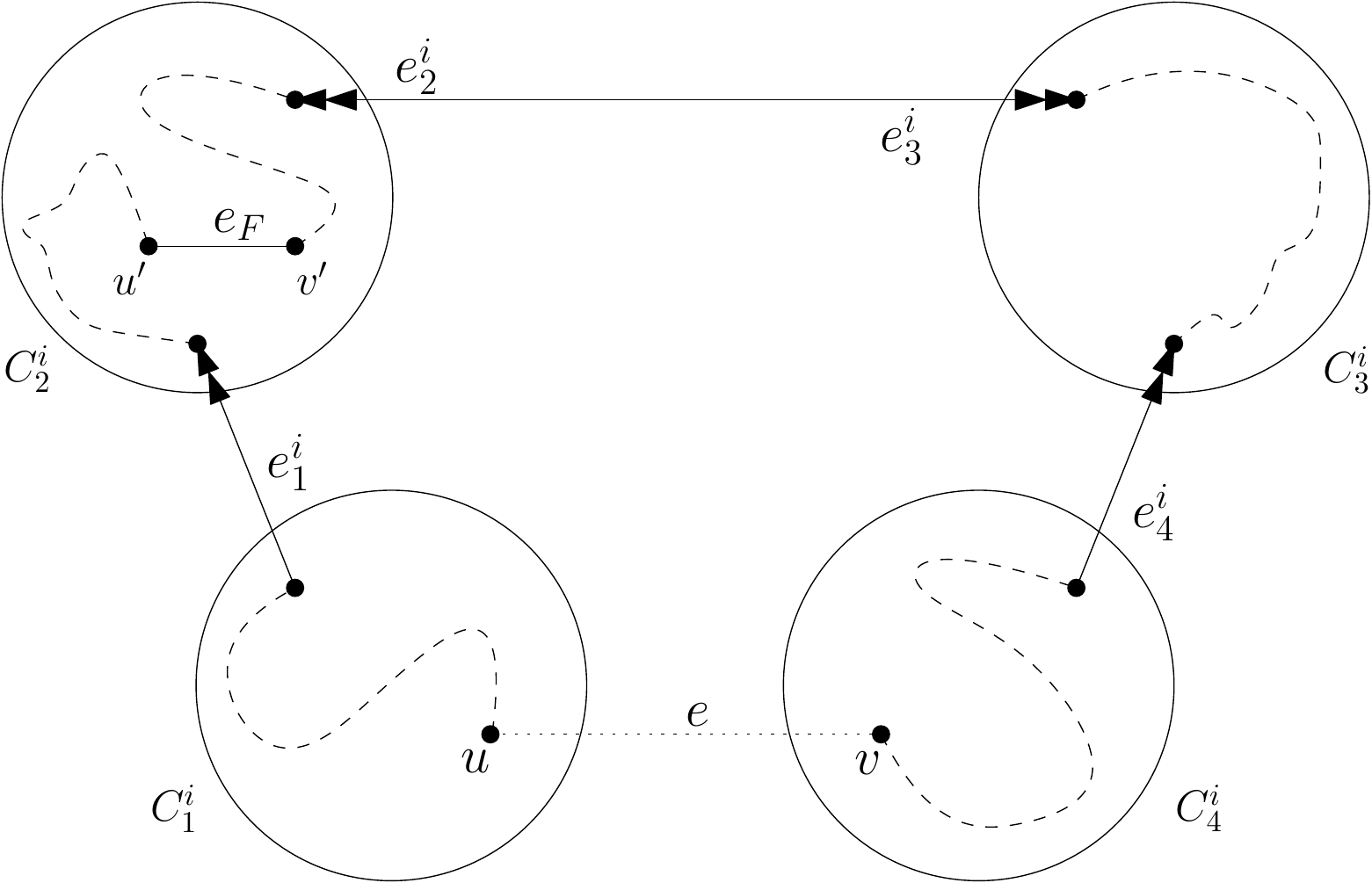}
  \caption{Illustration of proof of Lemma~\ref{lemma:correctness}. 
    After Phase $i$, components $C^i_1, C^i_2, C^i_3, C^i_4$ are merged together using edges $e^i_1, e^i_2, e^i_3, e^i_4$ in $F$. 
    Dashed curves represent paths in $F$ between the respective end-points. 
    $e$ is an $F$-light edge. 
    $e_F$ is the heaviest edge on path from $u$ to $v$ in $F$.\label{fig:correctness}}
\end{figure}
\begin{table}[t]
  \caption{Time and message complexity for steps in Algorithm~\ref{algo:f-light} \textsc{Compute-F-Light}\label{tab:f-light}}
  \centering
  \small
  \begin{tabular}{l l l l}
    \hline
    \textbf{Step} & \textbf{Time} & \textbf{Messages} & \textbf{Analysis} \\ \hline \noalign{\vskip 0.2ex} 
    1 & $O(1)$ & $\ot(n)$ & Theorem~\ref{thm:dsg} \\  
    2 & - & - & Local computation \\
    3 & $O(1)$ & $\ot(n)$ & Trivial direct communication \\
    4 & $O(1)$ & $\ot(n / p)$ & Theorem~\ref{thm:sketches}\\
    5 & $O(1)$ & $\ot(n / p)$ & Lemma~\ref{lemma:step5}\\ \hline
  \end{tabular}
\end{table}
From Lemma~\ref{lemma:lsize} and Lemma~\ref{lemma:correctness} we get the following bound on the number of $F$-light edges in $G$.
\begin{corollary}\label{coro:fsize}
  W.h.p., the number of $F$-light edges in $G$ is $\ot\left({n}/{p}\right)$. 
\end{corollary}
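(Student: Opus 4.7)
The plan is to combine the two immediately preceding lemmas via a simple counting argument on Bor\u{u}vka's algorithm. By Lemma~\ref{lemma:correctness}, the $F$-light edge set $E_\ell$ coincides with $L = \bigcup_{i, j} L^i_j$, and Lemma~\ref{lemma:lsize} simultaneously caps $|L^i_j| \le c \log^3 n / p$ across all phase-component pairs with probability at least $1 - 1/n$. The only ingredient still missing is a bound on how many distinct sets $L^i_j$ need to be unioned together; once that count is $\ot(n)$, multiplying by the per-pair bound immediately yields the $\ot(n/p)$ claim.

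For this counting step I would exploit the laminar structure produced by the local simulation of Bor\u{u}vka on $F$. Each component $C^i_j$ is a subforest of $F$, and any two such components either have disjoint vertex sets or one strictly contains the other, so the family $\{C^i_j\}$ is laminar. Starting from $n$ singleton components at phase $1$ and ending with the connected components of $F$, each merge reduces the current component count by at least one while creating exactly one new vertex set, so the total number of distinct components arising is at most $2n - 1$. Moreover, whenever a component repeats across consecutive phases its MWOE in $F$ must be $\bot$, so the corresponding set $L^i_j$ is identical in every such phase and need only be counted once. Combining, $|E_\ell| \le |L| \le \sum_{\text{distinct } C^i_j} |L^i_j| \le (2n-1)\cdot c \log^3 n / p = \ot(n/p)$, w.h.p.

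The one place where slight care is needed is in handling frozen components with $e^i_j = \bot$. For such a component $w(e^i_j) = \infty$ and so $L^i_j$ is, by definition, the entire external $G$-boundary of $C^i_j$, which a priori could be very large. However, Lemma~\ref{lemma:lsize} is already formulated to cover this case: its proof bounds the length of an all-zero prefix in a sequence of $\Theta(\log n)$-wise-independent Bernoulli trials via Theorem~\ref{thm:schmidt}, and this bound applies verbatim whether or not any boundary edge of $C^i_j$ is ever sampled into $H$. The union bound inside Lemma~\ref{lemma:lsize} is already taken over all phase-component pairs, so no further probabilistic work is needed and the corollary reduces to the combinatorial counting above.
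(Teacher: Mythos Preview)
Your proposal is correct and follows essentially the same route the paper intends: the corollary is stated to follow directly from Lemma~\ref{lemma:lsize} and Lemma~\ref{lemma:correctness}, and your argument simply spells out the implicit count of phase--component pairs. Your laminar-family bound of $2n-1$ distinct components is a slightly sharper count than needed; the cruder observation that Bor\u{u}vka's algorithm has $O(\log n)$ phases with at most $n$ components each already gives $O(n\log n)\cdot O(\log^3 n/p)=\ot(n/p)$, which is presumably what the paper has in mind.
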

Table~\ref{tab:f-light} summarizes the time and message complexity of each step of Algorithm \textsc{Compute-F-Light}. 
A naive implementation of Step 5 may require super-constant number of rounds because of receiver-side bottlenecks, but we describe here a more sophisticated implementation which runs in $O(1)$ rounds, using $\ot(n/p)$ messages.
\begin{lemma}\label{lemma:step5}
  Step~5 of Algorithm~\ref{algo:f-light} can be implemented in $O(1)$ rounds using $\ot(n / p)$ messages.
\end{lemma}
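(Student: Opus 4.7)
The plan is to split Step~5 into three phases: (A) aggregate the sketches at each component leader, (B) have the leader sample one edge per summed sketch locally, and (C) notify the two endpoints of each sampled edge. Phase~(B) is purely local, so only (A) and (C) require communication, and both face potential receiver-side bottlenecks that I would defuse using $\Theta(\log n)$-wise-independent random relays.

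For Phase~(A), I would assign an intermediate \emph{aggregator} $A_{j,k} \in [n]$ to each (component, sketch-index) pair $(j,k)$, chosen by a $\Theta(\log n)$-wise-independent hash function shared across all nodes. Each $v \in C^i_j$ sends its $k$-th $w(e^i_j)$-restricted sketch to $A_{j,k}$, which sums the at most $|C^i_j|$ received sketches using linearity and forwards a single summed sketch of $O(\log^3 n)$ messages to the leader of $C^i_j$. The naive approach of sending every sketch directly to the leader could force the leader of a large component to receive $\Omega(|C^i_j| \cdot \log^5 n / p)$ sketches, exceeding the per-round bandwidth; routing through random aggregators spreads this receive load uniformly across all $n$ nodes. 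The leader ultimately receives only $N \cdot M = \ot(1/p)$ messages of summed sketches (where $N = \Theta(\log^5 n/p)$ and $M = O(\log^3 n)$), which is $O(n)$ in the parameter regimes used by \textsc{MST-v1} and \textsc{MST-v2}.

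For Phase~(C), each leader samples one edge per summed sketch (at most $N$ edges per component) and must notify both endpoints. The ``inside'' endpoint is easy to reach, but the ``outside'' endpoint could be a common target across many components, so I would again route each notification through an independently chosen random relay to balance the receive load. The total number of notifications is $2N$ times the number of components, i.e., $\ot(n/p)$, matching the bound on $F$-light edges from Corollary~\ref{coro:fsize}.

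The main obstacle, for both phases, is showing that the maximum receiver load remains $O(n)$ with high probability under only $\Theta(\log n)$-wise independent hashing rather than full independence. I would handle this by applying Theorem~\ref{thm:schmidt} to the load sums: the total number of $(j,k)$ pairs summed over all Bor\u{u}vka phases is $\ot(n/p)$ (using that $\sum_i |\mathcal{C}^i| = O(n)$ and $\sum_i \sum_j |C^i_j| = O(n \log n)$), so a random node's expected incoming traffic in each of (A) and (C) is $\ot(1/p)$, and the theorem's tail bound with $k = \Theta(\log n)$ promotes this expectation to an $O(n)$ upper bound that holds with probability at least $1 - 1/n^c$. Combining the two phases gives $O(1)$ rounds and $\ot(n/p)$ total messages, as claimed.
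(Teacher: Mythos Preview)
Your Phase~(A) scheme has a genuine gap that the paper's proof avoids. You route every $k$-th sketch from every node of $C^i_j$ to a \emph{single} aggregator $A_{j,k}$. That aggregator must therefore receive $|C^i_j|$ sketches, i.e.\ $|C^i_j|\cdot\Theta(\log^3 n)$ messages, for that one $(j,k)$ pair alone---independent of how the hash function happens to assign the other pairs. In the last Bor\u{u}vka phase on a connected $F$ we have $|C^i_j|=n$, so whatever node is chosen as $A_{j,k}$ must absorb $\Theta(n\log^3 n)$ messages, which already exceeds the $n-1$-message-per-round receive capacity and forces $\Omega(\log^3 n)$ rounds. Your appeal to Theorem~\ref{thm:schmidt} does not help here: that concentration bound controls how many $(j,k)$ pairs hash to a given node, but the weight contributed by a single pair with $|C^i_j|=\Theta(n)$ is itself $\omega(n)$, so no amount of load-balancing across pairs can rescue the bound. (Formally, Theorem~\ref{thm:schmidt} is stated for $0$--$1$ summands; the weighted sum you need has individual terms as large as the target bound.)

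The paper's proof addresses exactly this obstacle by replacing your one-level fan-in with a constant-depth aggregation tree. The commander partitions each component into groups of size $s=n^{2/3}p/\log^9 n$, appoints group leaders, and recurses; since $s\ge n^{1/6}/\log^9 n$ the tree has $O(1)$ levels. At every level each parent receives at most $s\cdot\Theta(\log^5 n/p)\cdot\Theta(\log^4 n)=\Theta(n^{2/3})$ bits, which is $O(n^{1-\epsilon})$ messages and therefore routable in $O(1)$ rounds via the RSG scheme (Theorem~\ref{thm:rsg}). The essential idea you are missing is that large components require \emph{multi-level} aggregation with bounded fan-in; a single random intermediary per $(j,k)$ merely relocates the bottleneck from the component leader to the aggregator.

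A secondary issue: in Phase~(C) you claim random relays ``balance the receive load'' at the outside endpoint, but relaying does not reduce the number of messages ultimately delivered to a fixed destination---it only spreads the intermediate traffic. A high-degree vertex can be the outside endpoint for $\Theta(n)$ distinct $F$-light edges, each reported from $O(\log n)$ different phases, so without further deduplication the final hop to that vertex is again super-$n$.
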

\begin{proof}
A component $C^i_j$ can be quite large and as a result, the volume of sketches of all nodes in $C^i_j$ can be
much larger than can be received by $C^i_j$'s component leader in $O(1)$ rounds. So before we can gather sketches
at component leaders, we perform two tasks: 
\begin{itemize}
\item[(i)] each commander $v_i$ sets up a simple \textit{rooted tree} communication
structure for each component $C^i_j$, $j = 1, 2, \ldots$ and 
\item[(ii)] $v_i$ informs each node in each component $C^i_j$, $j = 1, 2, \ldots$,
the identity of that node's parent in the rooted tree communication structure.
\end{itemize}
We will show that once these two tasks are completed,
then all requisite sketches can then be gathered at component leaders in $O(1)$ rounds. Of course, we will also
need to show that these two tasks can be completed in $O(1)$ rounds.

Recall that each commander $v_i$ knows $F$ and locally simulates \bor\ algorithm on $F$ and therefore knows the
components $C^i_j$ for all $j$. 
We will now describe how $v_i$ sets up the rooted tree communication structure for a particular component $C^i_j$.
Let $s := \frac{n^{2/3} \cdot p}{\log^9 n}$ and let $S_0 := C^i_j$.
Since $p = \sqrt{n/m}$, we know that $p$ is bounded below by $1/\sqrt{n}$ and therefore
$s \ge \frac{n^{1/6}}{\log^9 n}$.
This shows that $s$ is asymptotically greater than 1 and for the rest of the proof we assume that $s > 1$.
Now commander $v_i$ partitions $S_0$ into $\lceil |S_0|/s \rceil$ subsets, each of size at most $s$.
For each of the $\lceil |S_0|/s \rceil$ parts, node $v_i$ appoints a 
\textit{part leader} (e.g., node with smallest
ID in that part). Let $S_1$ be the set of part leaders. Note that $|S_1| = \lceil |S_0|/s \rceil$.
Next, commander $v_i$ appoints each part leader as the \textit{parent} of all other nodes in that part.

Now $v_i$ repeats this process on $S_1$ to construct the set $S_2$.
In other words, $v_i$ partitions $S_1$ in $\lceil |S_1|/s \rceil$ subsets, each of size at most $s$,
picks part leaders for each of the parts of $S_1$ ($S_2$ is the set of these part leaders), and 
appoints each part leader the parent of all other nodes in its part.
Commander $v_i$ continues in this manner until it generates a set $S_t$ such that $|S_t| \le s$.
Commander $v_i$ then picks a leader for $S_t$ and makes it the parent of all other nodes in $S_t$.
We let $S_{t+1}$ denote the singleton set containing this final leader.
It it easy to see that the choices of part leaders can be made such that the single node
in $S_{t+1}$ is the component leader of $C^i_j$.

Now note that $|S_{i+1}| = \lceil |S_i|/s \rceil$ for $i = 0, 1, \ldots, t$.
Since $|S_0| \le n$ and $s \ge \frac{n^{1/6}}{\log^9 n}$, it follows that
$t = O(1)$ and therefore the rooted tree communication structure we create for each component 
has $O(1)$ depth.

Since each part has size at most $s$, each node in the rooted tree has at most $s$ children.
Now consider how the sketches are sent up this rooted tree to the component leader of $C^i_j$.
First, nodes in $S_0$ are required to send $\Theta\left(\frac{\log^5 n}{p}\right)$ sketches each to 
their parents, i.e., nodes in $S_1$.
Thus each node in $S_1$ needs to receive a total of at most
$$s \times \Theta\left(\frac{\log^5 n}{p}\right) \times \Theta(\log^4 n)  = 
\frac{n^{2/3} \cdot p}{\log^9 n} \times \Theta\left(\frac{\log^5 n}{p}\right) \times \Theta(\log^4 n)  = 
\Theta(n^{2/3})$$
bits. 
This means that we can use the RSG scheme (Theorem \ref{thm:rsg}) to deliver all sketches from nodes in 
$S_0$ to nodes in $S_1$ in $O(1)$ rounds, while keeping the number of messages bounded above by
$O\left(|C^i_j| \cdot \frac{\log^5 n}{p}\right)$.
Once sketches are delivered to nodes in $S_1$, these nodes will aggregate the sketches.
More specifically, suppose that each node $v$ in $S_0$ organizes the $\Theta\left(\frac{\log^5 n}{p}\right)$ sketches
that it sends to its parent, as a vector $(s_1(v), s_2(v), \ldots, s_\beta(v))$ where $\beta = \Theta\left(\frac{\log^5 n}{p}\right)$. 
Each node $w$ in $S_1$, on receiving sketch-vectors from children, computes the following size-$\beta$ vector:
$$\left(\sum_{v} s_1(v), \sum_{v} s_2(v), \ldots, \sum_{v} s_\beta(v)\right).$$
Each of the sums above are over all children $v$ of $w$ (in the rooted tree).
Note that the linearity property of the sketches permits this type of aggregation.
At the end of this step, nodes in $S_1$ have a size-$\beta$ vectors to send to their parents (i.e., nodes in $S_2$).
The above-described process that delivers information from $S_0$ to $S_1$ can be used to deliver information
from $S_1$ to $S_2$, also in $O(1)$ rounds, using $O\left(|C^i_j| \cdot \frac{\log^5 n}{p}\right)$ messages.
Thus, this scheme delivers $\beta = \Theta\left(\frac{\log^5 n}{p}\right)$ component sketches to the component leader
of $C^i_j$ in $O(1)$ rounds while using $O\left(|C^i_j| \cdot \frac{\log^5 n}{p}\right)$ messages.

The routing scheme we have described above can be executed in parallel for all components in 
a particular phase, i.e., for $C^i_j$ for a fixed $i$ and all possible $j$.
We now point out that a stronger claim is true: the above-mentioned routing can be accomplished in parallel for 
all phases as well.
This is because there are $O(\log n)$ phases and thus each node has $O(\log n)$ times as much information to send and 
receive as before (when we were talking about just one phase) and the constraints of the RSG scheme are still met.
Thus this routing scheme delivers information needed by each component leader to compute 
$\Theta\left(\frac{\log^5 n}{p}\right)$ component sketches, in $O(1)$ rounds using 
$O\left(n \cdot \frac{\log^5 n}{p}\right) = \ot(n/p)$ messages.

Finally, we point out that the information on the routing tree communication structure can be communicated by the
commanders to all nodes in 1 communication round. This is because each commander $v_i$ needs to tell each node $v$ 
the ID of $v$'s parent in the routing tree, of $C^i_j$, where $v$ belongs to $C^i_j$. Thus each commander needs
to send $n$ messages to $n$ distinct nodes.
Also note that there are $O(\log n)$ phases in \bor\ algorithm and therefore each node needs to receive messages
from $O(\log n)$ distinct nodes (commanders).
All this can be done by direct communication in 1 round using $O(n \log n)$ messages.
\end{proof}
From Lemma~\ref{lemma:correctness} and Table~\ref{tab:f-light} we get the following result.
\begin{theorem}\label{thm:computeF}
  Algorithm \textsc{Compute-F-Light} computes all $F$-light edges for given graph $G$ and a minimum spanning forest $F$ of $H$ where $H$ is obtained by sampling each edge in $G$ with probability $p$ using a $\Theta(\log n)$-wise-independent sampler. 
  Moreover, the computation takes $O(1)$ rounds and uses $\ot\left({n}/{p}\right)$ messages w.h.p.
\end{theorem}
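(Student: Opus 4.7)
The plan is to assemble Theorem~\ref{thm:computeF} directly from the structural and implementation lemmas already established. Correctness and the complexity bounds will be treated separately: correctness follows from combining Lemma~\ref{lemma:correctness} with Lemma~\ref{lemma:lsample}, while the round and message bounds follow by going through Algorithm~\ref{algo:f-light} step by step and invoking Table~\ref{tab:f-light}.

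For correctness, I would first note that by Lemma~\ref{lemma:correctness}, the set of $F$-light edges is exactly $L = \bigcup_{i,j} L^i_j$, so it suffices to show the algorithm outputs each $L^i_j$ in full w.h.p. Fix a phase~$i$ and a component $C^i_j$. Every node $v \in C^i_j$ receives the 3-tuple $(i,\text{label},w(e^i_j))$ in Step~3 and in Step~4 produces $\Theta(\log^5 n / p)$ independent $w(e^i_j)$-restricted sketches of $N_{w(e^i_j)}(v)$. By the linearity property (Theorem~\ref{thm:sketches}), the component leader can combine these coordinate-wise to obtain $\Theta(\log^5 n / p)$ sketches of the multiset $\bigcup_{v\in C^i_j} N_{w(e^i_j)}(v)$, whose nonzero entries after cancellation are precisely the edges of $L^i_j$. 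Sampling each aggregated sketch then yields an almost-uniform edge of $L^i_j$, and by Lemma~\ref{lemma:lsample} this many samples suffice to cover $L^i_j$ entirely with probability at least $1 - n^{-c}$ for any constant $c$. A union bound over the $O(\log n)$ phases and $O(n)$ components gives correctness w.h.p.

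For the round complexity, each step is $O(1)$: Step~1 invokes the DSG scheme (Theorem~\ref{thm:dsg}); Step~2 is purely local since every commander holds $F$; Step~3 is direct point-to-point communication where each commander sends at most one tuple per node; Step~4 is local sketch computation from the shared seed $\pi$; and Step~5 is $O(1)$ by Lemma~\ref{lemma:step5}. For the message complexity, Step~1 costs $\ot(n)$ (Theorem~\ref{thm:dsg}, since $|E(F)| \le n-1$), Step~3 costs $\ot(n)$ (each of $\Theta(\log n)$ commanders sends at most $n$ tuples, and each node receives $O(\log n)$ of them), and Step~5 costs $\ot(n/p)$ by Lemma~\ref{lemma:step5}, using the fact that $\sum_{i,j} |C^i_j| = O(n \log n)$ because the number of components halves every \bor phase. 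Summing and absorbing polylogarithmic factors yields $\ot(n/p)$ total messages.

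The only genuinely non-trivial ingredient is the implementation of Step~5, which is where receiver-side congestion at large component leaders could destroy the round bound; this obstacle has already been handled by the hierarchical aggregation tree of depth $O(1)$ and the RSG-based routing in Lemma~\ref{lemma:step5}. All the other pieces are routine combinations of the stated lemmas and the table, so the formal proof reduces to one paragraph of bookkeeping.
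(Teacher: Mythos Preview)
Your proposal is correct and follows essentially the same route as the paper, which proves the theorem in one line by citing Lemma~\ref{lemma:correctness} and Table~\ref{tab:f-light}; you have simply unpacked those references (and rightly made explicit the appeal to Lemma~\ref{lemma:lsample} for full coverage of each $L^i_j$, which the paper leaves implicit). One small slip worth fixing: Step~4 is not purely local computation from the seed $\pi$---$\pi$ is the edge-sampling seed, whereas sketch construction needs its own shared randomness and costs $\ot(n/p)$ messages via Theorem~\ref{thm:sketches}, exactly as Table~\ref{tab:f-light} records.
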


\renewcommand{\epsilon}{\mathnormal{\oldepsilon}}
\section{Super-Fast Linear-Message-Complexity MST Algorithms}
\label{sec:linear}
In this section we first describe three low-message-complexity routing subroutines and then 
we describe a low-message-complexity sorting subroutine. 
We show that these subroutines can be applied to any of three known ``super-fast'' Congested Clique MST algorithms~\cite{lotker2005mstJournal, Hegeman15podc, GhaffariParter} to reduce their message complexity to $\ot(m)$ while leaving their time complexity unchanged. 
Specifically, we apply these subroutines to the algorithm of Ghaffari and Parter~\cite{GhaffariParter} to obtain an algorithm, we call \textsc{LinearMessages-MST},
that computes an MST of $m$-edge $n$-node input graph in $O(\log^* n)$ rounds and using $\ot(m)$ messages. 

\subsection{Routing Subroutines}
\label{app:routing}
Many recent Congested Clique algorithms have relied on the deterministic routing protocol
due to Lenzen \cite{Lenzen13} that runs in constant rounds on the Congested Clique.
The specific routing problem, called an \textit{Information Distribution Task},
solved by Lenzen's protocol~\cite{Lenzen13} is the following.
Each node $i \in V$ is given a set of $n' \le n$ messages, each of size $O(\log n)$, $\{m_i^1, m_i^2, \ldots, m_i^{n'}\}$,
with destinations $d(m_i^j) \in V$, $j \in [n']$.
Messages are globally lexicographically ordered by their source $i$, destination $d(m_i^j)$, and $j$.
Each node is also the destination of at most $n$ messages.
Lenzen's routing protocol solves the Information Distribution Task in $O(1)$ rounds.
While this subroutine is extremely useful for designing fast Congested Clique algorithms, 
the number of messages is not a resource it tries to explicitly conserve.
Specifically, Lenzen's routing protocol uses $\Omega(n^{1.5})$ messages, independent of
the number of messages that need to be routed.
We observe that the above-mentioned ``super-fast'' MST algorithms do not require the full power of 
Lenzen's routing protocol. What we present below are $O(1)$-round algorithms for slightly restricted routing
problems that use linear number of messages.
These routing protocols suffice for all the routing needs of our MST algorithms.

\begin{theorem}[Randomized Scatter-Gather (\rsg)]\label{thm:rsg}
There are $k$ messages that need to be delivered and each
node is source of up to $n$ messages and each node is destination of up to $c\cdot n^{1-\epsilon}$ messages, where $\epsilon > 0$ and $c \geq 1$ are constants.  
Then there exists an algorithm that, with probability at least $1 - \frac{1}{n}$, delivers all $k$ messages within $\left\lceil 3c/\epsilon\right\rceil$ rounds using $2k$ messages.
\end{theorem}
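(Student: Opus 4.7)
The plan is a standard two-phase scatter-gather scheme analyzed via a Chernoff-type tail bound on the congestion at each (relay, destination) pair. In the \emph{scatter} phase, each source picks a uniformly random permutation $\pi$ of $V$ and sends its $i$-th message along the direct link to relay $\pi(i)$; since each source has at most $n$ messages and the relays are distinct, the scatter phase uses at most one message per outgoing link and completes in a single round. In the \emph{gather} phase, each relay forwards its received messages on the direct link to their intended destinations. The link from relay $v$ to destination $d$ must carry $Y_{v,d}$ messages, where $Y_{v,d}$ denotes the number of messages at $v$ addressed to $d$, so the gather phase completes in $\max_{v,d} Y_{v,d}$ rounds. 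Each of the $k$ original messages is transmitted exactly twice, source-to-relay and then relay-to-destination, yielding the claimed $2k$-message complexity.

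For the round bound, fix $v \in V$ and $d \in V$. Under the scheme, each message addressed to $d$ is routed through $v$ with marginal probability exactly $1/n$; messages from distinct sources make independent choices, and messages from the same source are routed to \emph{distinct} relays (since $\pi$ is a permutation), giving negatively correlated indicators for $Y_{v,d}$. Because at most $cn^{1-\epsilon}$ messages in total are addressed to $d$, the random variable $Y_{v,d}$ is stochastically dominated by a Binomial$(cn^{1-\epsilon},1/n)$ variable with mean $\mu \le c/n^\epsilon$, and the standard upper-tail bound $\Pr(Y_{v,d}\ge t) \le (e\mu/t)^t$ applies. Setting $t := \lceil 3c/\epsilon\rceil$, and using $c\ge 1$ and $0<\epsilon\le 1$ to conclude $t \ge 3c > ec$, I obtain
$$\Pr(Y_{v,d} \ge t) \;\le\; \left(\frac{ec}{t \, n^\epsilon}\right)^{t} \;\le\; n^{-\epsilon t} \;\le\; n^{-3c} \;\le\; n^{-3}.$$
A union bound over the $n^2$ pairs $(v,d)$ then gives $\Pr(\max_{v,d} Y_{v,d} \ge \lceil 3c/\epsilon\rceil) \le 1/n$, so with probability at least $1-1/n$ the gather phase takes at most $\lceil 3c/\epsilon\rceil - 1$ rounds, and the overall round count is at most $1 + (\lceil 3c/\epsilon\rceil - 1) = \lceil 3c/\epsilon\rceil$.

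The main subtlety I expect is justifying that concentration applies despite intra-source correlations: the random-permutation implementation is chosen precisely so that the scatter phase costs only one round while keeping the indicators for $Y_{v,d}$ negatively correlated (rather than positively correlated, as would occur if the source had to resend colliding messages over multiple rounds), which is what makes the standard Chernoff bound applicable. The remaining work is routine bookkeeping to verify that the stated constant $\lceil 3c/\epsilon\rceil$ absorbs both the one-round scatter and the $(t-1)$-round gather over the full parameter range $c\ge 1,\ 0<\epsilon\le 1$.
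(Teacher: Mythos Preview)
Your proposal is correct and follows essentially the same approach as the paper: a one-round random scatter to distinct relays (via a random injection per source), a gather phase whose length is governed by $\max_{v,d} Y_{v,d}$, the observation that indicators from the same source are negatively correlated while those from different sources are independent, and a Chernoff-plus-union-bound analysis. If anything, your bookkeeping is tighter than the paper's---you explicitly account for the scatter round in the total and obtain the $n^{-3}$ per-pair tail needed for the union bound over $n^2$ pairs to yield $1-1/n$.
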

\begin{proof}
Each node $v$ distributes messages it needs to send, uniformly at random among all nodes, with the constraint that no node gets more than one message.
Each intermediate node then sends the received messages to the specified destinations. 
If an intermediate node receives several messages intended for the same destination, it sends these one-by-one in separate rounds.
We show that w.h.p. no intermediate node will receive more than $\left\lceil {3c}/{\epsilon} \right\rceil$ messages intended for the same destination and 
hence every intermediate node can deliver all messages to destinations in $\left\lceil {3c}/{\epsilon} \right\rceil$ rounds. 

\noindent Let $M_w$ be the set of messages from all senders intended for $w$ and 
let $r_w = |M_w| \leq c\cdot n^{1-\epsilon}$ be the total number of messages intended for $w$. 
Consider a node $u$. 
Let $X_w(u)$ be the random variable denoting the number of messages intended for $w$,  received by $u$ in the first step.
For $m \in M_w$, let $Y_m(u) \in \{0, 1\}$ indicate if $m$ was sent to $u$ in the first step.
Hence $X_w(u) = \sum_{m\in M_w} Y_m(u)$. 
Since $u$ was chosen uniformly at random as the intermediate destination for messages intended to $w$, we have $\E[X_w(u)] \leq \frac{cn^{1-\epsilon}}{n} = c\cdot n^{-\epsilon}$.
Notice that if for any subset of messages in $M_w$ if the sources of these messages is different then the corresponding indicator variables  are independent. 
On the other hand if the source of these messages is the same then they are negatively correlated~\cite{DubhashiBook}. 
Therefore by Chernoff's bound~\cite{DubhashiBook} we have, $\Pr(X_w(u) > c') \leq n^{-2}$ where $c' \leq \left\lceil {3c}/{\epsilon}\right\rceil$.
By the union bound, with probability at least $1-n^{-1}$, each intermediate node will receive at most $\left\lceil {3c}/{\epsilon}\right\rceil$ messages intended for each node and hence can be delivered in less than $\lceil {3c}/{\epsilon}\rceil$ rounds. 
\end{proof}

\noindent By using techniques from~\cite{berns2012facloc, DolevLP12}, we obtain the following result for a particular case of the routing problem. 
\begin{theorem}[Deterministic Scatter-Gather (\dsg)]\label{thm:dsg}
A subset of nodes hold $k$ messages intended for a node $v^*$.
Then there exists a deterministic algorithm that delivers all $k$ messages within $2\left\lceil {k}/{n} \right\rceil + 2$ rounds using $2k + 2$ messages.
Moreover, this can be extended to a scenario where 
there is a set $V^* \subseteq V$ of destinations and every message needs to be delivered to every node in $V^*$. 
In this case, the algorithm terminates in $2\left\lceil {k}/{n} \right\rceil + 2$ rounds using $(2k + 2)|V^*|$ messages. 
\end{theorem}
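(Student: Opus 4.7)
The plan is to route the $k$ messages to $v^*$ via intermediate relay nodes in order to circumvent the receiver-side bottleneck: $v^*$ has only $n-1$ incident links, so it can accept at most $n-1$ messages per round. This forces any algorithm to use at least $\lceil k/n \rceil$ rounds, and I will aim for a two-phase scheme matching this up to a factor of two. In the \emph{spread} phase, each sender distributes its messages evenly among the other nodes so that every node ends up holding at most $\lceil k/n \rceil$ of the $k$ messages. In the \emph{collect} phase each intermediate then forwards its held messages to $v^*$, one per round over its dedicated link. The spread phase transmits each message once, for $k$ payload messages and $\lceil k/n \rceil$ rounds of sender-side work; the collect phase likewise transmits each message once, for $k$ payload messages and $\lceil k/n \rceil$ rounds of receiver-side work at $v^*$. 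Adding a constant amount of start-up and completion signaling yields the stated bounds of $2\lceil k/n \rceil + 2$ rounds and $2k+2$ messages.

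The main obstacle is to realize the spread phase with the desired load bound of $\lceil k/n \rceil$ per intermediate while using only $O(1)$ coordination messages beyond the $2k$ payload transmissions. The natural approach, in which $v^*$ first polls every sender for its message count and then returns offsets, costs $\Theta(|S|)$ messages, which could be as large as $k$ and would blow the $2k+2$ budget. I plan to follow the techniques of~\cite{berns2012facloc, DolevLP12}: because node IDs impose a globally consistent ordering on the senders, each sender can deterministically compute the target intermediate for each of its messages from only locally-known information, for example by placing its messages into slots of a canonical global sequence indexed $1,\ldots,k$ and mapping index $i$ to intermediate $((i-1)\bmod n)+1$. The offset of sender $u$ inside the canonical sequence can be arranged so that it is already determined by structural data known to $u$ at invocation time, or equivalently by a single broadcast from $v^*$ whose cost is absorbed into the $+2$ overhead.

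The extension to an arbitrary destination set $V^* \subseteq V$ reuses the spread phase verbatim: after Step~1, every intermediate already holds at most $\lceil k/n \rceil$ messages, independent of how many final destinations we wish to hit. I then modify the collect phase so that each intermediate forwards \emph{each} of its held messages to \emph{every} node in $V^*$. The intermediate has distinct outgoing links to each destination usable in parallel, and has at most $\lceil k/n \rceil$ messages to dispatch to each, so by scheduling one send per destination per round it empties its queue in $\lceil k/n \rceil$ rounds; each destination simultaneously receives at most one message per incoming link per round, respecting its own bandwidth. The spread phase still costs $k$ messages, the collect phase now costs $k|V^*|$ messages (one per (message, destination) pair), and the per-destination setup overhead contributes $O(|V^*|)$ more, summing to at most $(2k+2)|V^*|$ messages in the same $2\lceil k/n \rceil + 2$ rounds.
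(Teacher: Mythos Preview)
The paper does not actually prove this theorem; immediately before the statement it says only that the result is obtained ``by using techniques from~\cite{berns2012facloc, DolevLP12}.'' Your two-phase scatter--gather outline is exactly that standard technique, so at the level of approach there is nothing to contrast: you and the paper agree.

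The one soft spot is the point you yourself flag, namely fitting the offset computation into the $2k+2$ message budget. Neither of your two suggested fixes works as stated. The prefix offsets are \emph{not} in general determined by data locally available to a sender $u$, since $u$ does not know $k_{u'}$ for senders $u'$ with smaller ID; and ``a single broadcast from $v^*$'' costs $n-1$ messages, not $O(1)$. A clean way to close the gap is piggybacking: in the very first round each sender transmits its \emph{first} payload message directly to $v^*$ with its count $k_u$ appended (still a single $O(\log n)$-bit message). That round already delivers $|S|$ of the $k$ messages while communicating all the counts at zero extra message cost; one further round lets $v^*$ return the $|S|$ offsets. The remaining $k-|S|$ messages are then scattered and gathered exactly as you describe. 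The tally is $|S| + |S| + 2(k-|S|) = 2k$ messages in $2 + 2\lceil k/n\rceil$ rounds, matching the claimed bounds. Your extension to a destination set $V^*$ is correct once the single-destination case is nailed down: the spread phase is shared across all destinations, and only the collect phase fans out, giving the $(2k+2)|V^*|$ count.
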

Now consider the reverse scenario:
\begin{theorem}[Deterministic Gather-Scatter (\dgs)]\label{thm:dgs} 
A node $v^*$ holds a bulk of messages intended for a subset of nodes $R\subseteq V$ such that the total number of messages is $k \leq n$ and each message needs to delivered to \emph{all} nodes in $R$. 
Then there exists a deterministic algorithm that delivers all $k$ messages within $2$ rounds using $k + k\cdot|R|$ messages. 
\end{theorem}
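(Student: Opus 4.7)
The plan is to implement a straightforward two-round relay scheme in which $v^*$ hands off its $k$ messages to $k$ distinct relay nodes in the first round, and in the second round each relay broadcasts its assigned message to the full set $R$ in parallel.

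First, since $k \leq n$, node $v^*$ deterministically picks $k$ distinct relays $u_1, u_2, \ldots, u_k \in V$ (for concreteness, the $k$ smallest-ID nodes other than $v^*$). In Round~1, $v^*$ sends the $i$-th message $m_i$ to relay $u_i$; this uses exactly $k$ messages and fits within $v^*$'s per-round bandwidth of $n-1$ outgoing links. In Round~2, each relay $u_i$ simultaneously sends $m_i$ to every node in $R$, exploiting the fact that a Congested Clique node can send a distinct $O(\log n)$-bit message on each of its $n-1$ outgoing links in a single round; since $|R| \leq n$, this fits and costs $k \cdot |R|$ messages. At the end of Round~2, every node in $R$ has received all $k$ messages, for a total of $k + k \cdot |R|$ messages over $2$ rounds, matching the theorem exactly. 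Note also that no destination is over-subscribed: each node in $R$ receives at most one message from each of the $k \leq n$ relays, so the per-link bandwidth is respected.

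The main obstacle is ensuring that each relay $u_i$ knows the target set $R$ at the start of Round~2, since we cannot afford to append the $\Theta(|R|)$ identifiers of $R$ to the single $O(\log n)$-bit slot that $v^*$ uses to deliver $m_i$ in Round~1. I would resolve this by appealing to the KT1 assumption together with the standing convention for how \dgs{} is deployed in the paper: in the intended use (e.g., a commander disseminating component information to all members of a component), $R$ is either determined by a globally known rule or has been fixed by a prior phase of the algorithm, so every potential relay already knows $R$ when \dgs{} is invoked. Under this assumption, no extra communication is needed and the stated bounds hold exactly; otherwise, one would need a preliminary round or additional message budget purely to disseminate~$R$.
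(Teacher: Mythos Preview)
Your proposal is correct and follows essentially the same two-round relay scheme as the paper: $v^*$ hands each of the $k$ messages to a distinct supporter node in one round, and each supporter then broadcasts its message to all of $R$ in the next round. Your additional discussion of how the relays learn $R$ is a valid point that the paper simply elides; the resolution you give (that $R$ is globally determined in the contexts where \dgs{} is invoked, e.g., $R = V$) is the right one.
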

\begin{proof}
Node $v^*$ sends each message $m_i$ to a \emph{supporter node} $s_i$. 
Since $k < n$, an one-to-one mapping of $m_i$ to $s_i$ is possible and hence this can be done in a single round and uses $k$ messages.
Each supporter node then broadcast the received message to all nodes in $R$.
This requires one round and $k\cdot |R|$ messages. 
\end{proof}

\subsection{Sorting Subroutine}
\label{app:linearMST}

The Ghaffari and Parter MST algorithm (\textsc{GP-MST}) is partly based on techniques of Hegeman et al.~\cite{Hegeman15podc} and one of the key ideas there is to sort edges in the input graph based on weights. 
\textsc{GP-MST} and Hegeman et al.~\cite{Hegeman15podc} both rely on the $O(1)$-round deterministic sorting routine by Lenzen~\cite{Lenzen13} which requires $\Omega(n^{1.5})$ messages regardless of the number of keys to sort. 
In addition to the low-message-complexity routing primitives mentioned above, we develop a new low-message-complexity 
sorting primitive (based on the Congested Clique sorting algorithm of \cite{Lenzen13}).

Consider the following problem: 
given $k$ keys of size $O(\log n)$ each from a totally ordered universe such that each node has up to $n$ keys. 
The goal is to learn the rank of each of these keys in a global ordered enumeration of all $k$ keys, i.e.,
each node should learn the ranks of the keys it is holding. 
Patt-Shamir and Teplitsky~\cite{Patt-ShamirT11} designed a randomized algorithm that solved 
this problem in $O(\log \log n)$ rounds which was later improved to $O(1)$ rounds by the deterministic algorithm of Lenzen~\cite{Lenzen13}. 
But, both the algorithms~\cite{Patt-ShamirT11, Lenzen13} have $\Omega(n^{1.5})$ message complexity regardless of the number of keys to sort.
We provide a randomized algorithm which reduces the problem to the similar problem as above but on a smaller clique. 
Our algorithm solves the problem for $k = O(n^{2-\epsilon}), \epsilon > 0$ in $O(1)$ rounds using $O(k)$ messages w.h.p.

The high level idea of our Algorithm \textsc{DistributedSort} is to redistribute $k$ keys to $\lfloor \sqrt{k} \rfloor$ nodes and then sort them using Lenzen's sorting algorithm~\cite{Lenzen13} on the clique induced by these $\lfloor \sqrt{k} \rfloor$ nodes in $O(1)$ rounds with $O(k)$ messages. 
For the redistribution, we rely on our low-message routing schemes (\rsg~and \dsg). 
Let $k_v$ be the number of keys $v$ has. 
Each node $v$ sends $k_v$ to node $v^*$.
Notice that, $k = \sum_{w \in V} k_w$. 
Let $idx_w = \sum_{u : ID(u) < ID(w)} k_u$ for all $w \in V$.
For each $w \in V$, $v^*$ sends $idx_w$ to $w$ .
Order keys present at each node $v$ arbitrarily. 
Assign labels to keys starting from $idx_v$. 
Set destination of the key with label $i$ to node $\left(i\mod \lfloor \sqrt{k} \rfloor\right)$. 
At this point the input is divided among $\lfloor \sqrt{k} \rfloor$ nodes, each holding up to $\lceil \sqrt{k} \rceil$ keys. 
Let $V_\mu$ denote the set of nodes with IDs in the range $[0, \lfloor \sqrt{k} \rfloor-1]$.
Nodes in $V_\mu$ executes Lenzen's sorting algorithm~\cite{Lenzen13} and learn the global index of the keys in sorted order.
Each key with its rank in global sorted order is sent back to the original node (by reversing the route applied earlier to this key). 

\begin{theorem}[Distributed Sorting]\label{thm:sorting}
  Given $k = O(n^{2 - \epsilon})$ comparable keys of size $O(\log n)$ each such that each node has up to $n$ keys for some constant $\epsilon > 0$.
  Then, Algorithm \textsc{DistributedSort} requires $O\left({1}/{\epsilon}\right)$ rounds and $O(k)$ messages w.h.p., 
  such that at the end of the execution each node knows the rank of each key it has. 
\end{theorem}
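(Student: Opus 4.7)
The plan is to treat \textsc{DistributedSort} as a three-stage pipeline -- rebalance the $k$ keys onto a sub-clique $V_\mu$ of size $\lfloor\sqrt{k}\rfloor$, invoke Lenzen's sorter on $V_\mu$, and route the computed ranks back to their original holders -- and then verify the round and message budget stage by stage. All bookkeeping that happens before the main redistribution involves only the per-node counts $k_v$ and the prefix sums $idx_w$, of which there are at most $n$, each of size $O(\log n)$. Node $v^*$ can collect the $k_v$'s using \dsg (Theorem \ref{thm:dsg}) and distribute the $idx_w$'s using \dgs (Theorem \ref{thm:dgs}); together these cost $O(1)$ rounds and $O(n) = O(k)$ messages deterministically.

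The quantitatively delicate stage is the redistribution into $V_\mu$. After labeling, each key with label $i$ is addressed to node $i \bmod \lfloor\sqrt{k}\rfloor$, so each of the $\lfloor\sqrt{k}\rfloor$ destinations receives at most $\lceil\sqrt{k}\rceil$ keys. Because $k = O(n^{2-\epsilon})$, we have $\sqrt{k} = O(n^{1-\epsilon/2})$, which matches the destination-load precondition of \rsg (Theorem \ref{thm:rsg}) with parameter $\epsilon/2$ and some constant $c$. Hence all $k$ keys can be delivered to $V_\mu$ in $\lceil 6c/\epsilon\rceil = O(1/\epsilon)$ rounds using $2k$ messages w.h.p.

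With each of the $N := \lfloor\sqrt{k}\rfloor$ nodes of $V_\mu$ holding at most $\lceil\sqrt{k}\rceil = O(N)$ keys, Lenzen's deterministic sorting routine runs on the induced clique on $V_\mu$ in $O(1)$ rounds and uses $O(N^{1.5}) = O(k^{3/4}) = O(k)$ messages. At the end of this stage every key in $V_\mu$ is tagged with its global rank. The final stage simply inverts the routing performed in stage two: each intermediate relay used during \rsg recalls the original sender of each key it forwarded, so the ranked keys can be returned in $O(1/\epsilon)$ rounds and $O(k)$ messages by the same \rsg-style analysis (the source--destination roles are swapped but the load bounds are identical). Summing the stages yields $O(1/\epsilon)$ rounds and $O(k)$ messages, which is the claim.

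The main obstacle, and the only place where randomness is required, is checking that the destination load in the redistribution step genuinely satisfies the hypothesis of Theorem \ref{thm:rsg}; in particular that the implicit constant in $\sqrt{k} = O(n^{1-\epsilon/2})$ can be absorbed into the $c$ of that theorem so that the round count remains $\lceil 3c/\epsilon\rceil = O(1/\epsilon)$ rather than growing with $k$. Once this is justified, the routing-back step is symmetric and the remaining arithmetic -- adding $O(n) + 2k + O(k^{3/4}) + O(k) = O(k)$ messages and $O(1) + O(1/\epsilon) + O(1) + O(1/\epsilon) = O(1/\epsilon)$ rounds -- is immediate, and the high-probability qualifier comes entirely from the two invocations of \rsg via a union bound.
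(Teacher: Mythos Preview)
Your proposal is correct and follows essentially the same three-stage pipeline as the paper's proof (redistribute via \rsg, run Lenzen's sorter on the $\lfloor\sqrt{k}\rfloor$-node sub-clique, then reverse the routing), only with considerably more care: you make the bookkeeping step explicit, you correctly compute the destination load as $O(n^{1-\epsilon/2})$ (the paper writes $O(n^{1-\epsilon})$, which is a slip), and you track the $1/\epsilon$ dependence through the two \rsg\ invocations. One minor point: your claim that Lenzen's sorter on the $N$-node sub-clique uses $O(N^{1.5})$ messages is not obviously justified---the safe bound is $O(N^2)=O(k)$, which is what the paper uses---but since $k^{3/4}\le k$ this does not affect the conclusion; also note that $O(n)=O(k)$ in the bookkeeping step tacitly assumes $k=\Omega(n)$, which the paper assumes as well.
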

\begin{proof}
  We first show that the redistribution of keys among $\lfloor \sqrt{k} \rfloor$ nodes takes $O(1)$ rounds and $O(k)$ messages. 
  Since each of the $\lfloor \sqrt{k} \rfloor$ nodes need to receive $\lceil \sqrt{k} \rceil = O(n^{1-\epsilon})$ keys, the keys can be routed using the \rsg~(Theorem~\ref{thm:rsg}) in $O(1)$ rounds and $O(k)$ messages. 
  Nodes in $V_\mu$ can now execute Lenzen's sorting algorithm~\cite{Lenzen13} which takes $O(1)$ rounds and $O(k)$ messages. 
  The reverse routing of these keys takes another $O(1)$ rounds and $O(k)$ messages. 
  Therefore, in total Algorithm \textsc{DistributedSort} required $O(1)$ rounds and $O(k)$ messages. 
\end{proof}

We obtain the following result by replacing the routing and sorting routines due to Lenzen~\cite{Lenzen13} used in \textsc{GP-MST} with our routing and sorting routines developed above.
\begin{theorem}[\textsc{LinearMessages-MST}]\label{thm:linearMST}
  There exist a MST algorithm that computes a minimum spanning tree of an $n$-node $m$-edge input graph in $O(\log^* n)$ rounds using $\ot(m)$ messages w.h.p. in the Congested Clique. 
\end{theorem}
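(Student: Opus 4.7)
The plan is to take the Ghaffari-Parter MST algorithm (\textsc{GP-MST}) as a black box and surgically replace each invocation of Lenzen's routing and sorting primitives with the low-message counterparts developed earlier in Section~\ref{sec:linear}: \rsg{} (Theorem~\ref{thm:rsg}), \dsg{} (Theorem~\ref{thm:dsg}), \dgs{} (Theorem~\ref{thm:dgs}), and \textsc{DistributedSort} (Theorem~\ref{thm:sorting}). Because \textsc{GP-MST} already achieves $O(\log^* n)$ rounds and each of the replacement primitives runs in $O(1)$ rounds (for constant $\varepsilon$), the round complexity is automatically preserved; the real content of the proof is to verify that at every routing/sorting call site, the preconditions of the substitute primitive are satisfied and that the total message volume is $\ot(m)$.

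First I would enumerate the places where \textsc{GP-MST} uses Lenzen's algorithms: the initial global sort of edges by weight used to build the sparsity-sensitive sketch; the random-sampling/sketching round; and the per-phase gatherings and broadcasts of component labels and minimum-weight outgoing edges. At each call site I would verify two things: that the number of messages being routed is $\ot(m)$ (which follows from the fact that \textsc{GP-MST} only ever moves information proportional to edges or to contracted-component summaries) and that no destination receives more than $O(n^{1-\varepsilon})$ messages for some constant $\varepsilon>0$, which is the precondition of \rsg. Point-to-point ``everyone-to-one'' gathers at a single leader are handled by \dsg, and the reverse broadcasts from a single leader to a set of destinations are handled by \dgs, both of which have linear message cost. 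The sorting of edge weights is replaced by \textsc{DistributedSort} with $k=m$; since $m = O(n^2)$, if $m = O(n^{2-\varepsilon})$ Theorem~\ref{thm:sorting} gives $O(1)$ rounds and $\ot(m)$ messages directly, and when $m = \Theta(n^2)$ the bound $\ot(m)=\ot(n^2)$ makes even the original Lenzen sort admissible.

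The main obstacle will be the bookkeeping at the aggregation steps inside \textsc{GP-MST}, where a component leader is conceptually collecting information from every node of its component. A naive instantiation would violate the per-destination load bound of \rsg{} because a component can have up to $\Theta(n)$ nodes and a leader would receive $\Theta(n)$ messages. To fix this I would use the same $O(1)$-depth rooted-tree aggregation idea that appears in the proof of Lemma~\ref{lemma:step5}: for each component, the commander who knows the whole forest locally appoints a balanced hierarchy of intermediate parents with fan-in $s = n^{c}$ for a small constant $c<1$, and sketches/MWOE data are pushed up the tree in $O(1)$ rounds by \rsg, with the linearity of sketches (or associativity of the $\min$ operator for MWOEs) used to aggregate at each level. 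This ensures each intermediate node receives only $O(n^{c})$ messages while the total message cost per phase remains proportional to the edges involved in the phase.

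Finally I would sum costs across \textsc{GP-MST}: each of its $O(\log^* n)$ phases costs $O(1)$ rounds and $\ot(\text{edges active in the phase})$ messages under the substitutions, and since the set of active edges is always a subset of $E$ (or a sampled subgraph whose expected size is smaller) the overall message complexity telescopes to $\ot(m)$ with high probability via a union bound over the $O(\log^* n)$ phases. Combined with the unchanged $O(\log^* n)$ round bound of \textsc{GP-MST}, this establishes Theorem~\ref{thm:linearMST}.
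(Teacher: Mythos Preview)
Your proposal is correct and takes essentially the same approach as the paper: the paper's own ``proof'' of Theorem~\ref{thm:linearMST} is simply the one-sentence assertion that replacing Lenzen's routing and sorting routines in \textsc{GP-MST} with the primitives of Section~\ref{sec:linear} yields the result, with no further detail. Your write-up is in fact a more fleshed-out version of that same argument, explicitly enumerating call sites, checking preconditions, and handling the per-component aggregation bottleneck via the constant-depth tree idea---details the paper leaves implicit.
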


\bibliography{mymst}
\end{document}